\def\ps@headings{%
\def\@oddhead{\mbox{}\scriptsize\rightmark \hfil \thepage}%
\def\@evenhead{\scriptsize\thepage \hfil \leftmark\mbox{}}%
\def\@oddfoot{}%
\def\@evenfoot{}}
\algnewcommand{\algorithmicinput}{\textbf{Input:}}
\algnewcommand{\algorithmicoutput}{\textbf{Output:}}
\algnewcommand{\Input}{\item[\algorithmicinput]}
\algnewcommand{\Output}{\item[\algorithmicoutput]}
\newcommand{\Prob}{\mathbb{P}}
\newtheorem{theorem}{Theorem}
\newtheorem{definition}{Definition}
\newcommand{\pmdi}{{\tt{privateMDI}}\xspace}
\begin{document}

%\title{Privacy-Preserving Model-Distributed Inference \\ at the Edge}

\title{Privacy-Preserving Hierarchical Model-Distributed Inference}

\author{\IEEEauthorblockN{Fatemeh Jafarian Dehkordi}
\IEEEauthorblockA{
\textit{University of Illinois Chicago}\\
fjafar3@uic.edu}
\and
\IEEEauthorblockN{ Yasaman Keshtkarjahromi}
\IEEEauthorblockA{
\textit{Seagate Technology}\\
yasaman.keshtkarjahromi@seagate.com}
\and
\IEEEauthorblockN{Hulya Seferoglu}
\IEEEauthorblockA{
\textit{University of Illinois Chicago}\\
hulya@uic.edu}
\thanks{This work was supported in parts by the Army Research Office (W911NF2410049), the National Science Foundation (CCF-1942878, CNS-2148182, CNS-2112471, CNS-1801708), and Seagate Technology (00118496.0).}
}

\maketitle

{$\hphantom{a}$}{}

\begin{abstract}

This paper focuses on designing a privacy-preserving Machine Learning (ML) inference protocol for a hierarchical setup, where clients own/generate data, model owners (cloud servers) have a pre-trained ML model, and edge servers perform ML inference on clients' data using the cloud server's ML model. Our goal is to speed up ML inference while providing privacy to both data and the ML model. Our approach (i) uses model-distributed inference (model parallelization) at the edge servers and (ii) reduces the amount of communication to/from the cloud server. Our privacy-preserving hierarchical model-distributed inference, \pmdi design uses additive secret sharing and linearly homomorphic encryption to handle linear calculations in the ML inference, and garbled circuit and a novel three-party oblivious transfer are used to handle non-linear functions. \pmdi consists of offline and online phases. We designed these phases in a way that most of the data exchange is done in the offline phase while the communication overhead of the online phase is reduced. In particular, there is no communication to/from the cloud server in the online phase, and the amount of communication between the client and edge servers is minimized. The experimental results demonstrate that \pmdi significantly reduces the ML inference time as compared to the baselines.

%We implemented \pmdi as well as baselines using NSF ACCESS computing platform \cite{NSFACCESS} and our in-lab computers. The experimental results demonstrate that \pmdi significantly improves the ML inference time as compared to the baselines. 

%hierarchical model-distributed inference, \pmdi protocol. 

%A promising solution is a hierarchical setup, where clients own/generate data, model owners (cloud servers) have a pre-trained ML model, and compute nodes (edge servers) perform ML inference on clients' data using the cloud server's ML model, Fig. \ref{fig:3Party}. 

\end{abstract}

%\vspace{-3pt}
\section{Introduction} \label{sec}
%\vspace{-2pt}

Machine learning (ML) has become a powerful tool for supporting applications such as mobile healthcare, self-driving cars, finance, marketing, agriculture, etc. These applications generate vast amounts of data at the edge, requiring swift processing for timely responses. On the other hand, ML models are getting more complex and larger, so they require higher computation, storage, and memory, which are typically constrained in edge networks but abundant in the cloud. Thus, the typical scenario is that the data owner (at the edge) is geographically separated from the model owner (in the cloud). 

The geographically separated nature of data and model owners poses challenges for ML inference as a service. We can naturally use a client/server-based approach where the data owner (client) sends its data to the model owner (cloud server) for ML inference. This approach violates data privacy and introduces communication overhead between the data and model owners, which is usually considered a bottleneck link in today's systems. Very promising privacy-preserving mechanisms have been investigated in the literature \cite{oblivious, gazelle, delphi}, which preserve the privacy of data in the client/server ML inference setup, but still suffer from high communication costs between data and model owners. Such high communication cost undermines the effectiveness of ML inference applications that require less latency and real-time response.

\begin{figure}
\centerline{\includegraphics[width=2.7in]{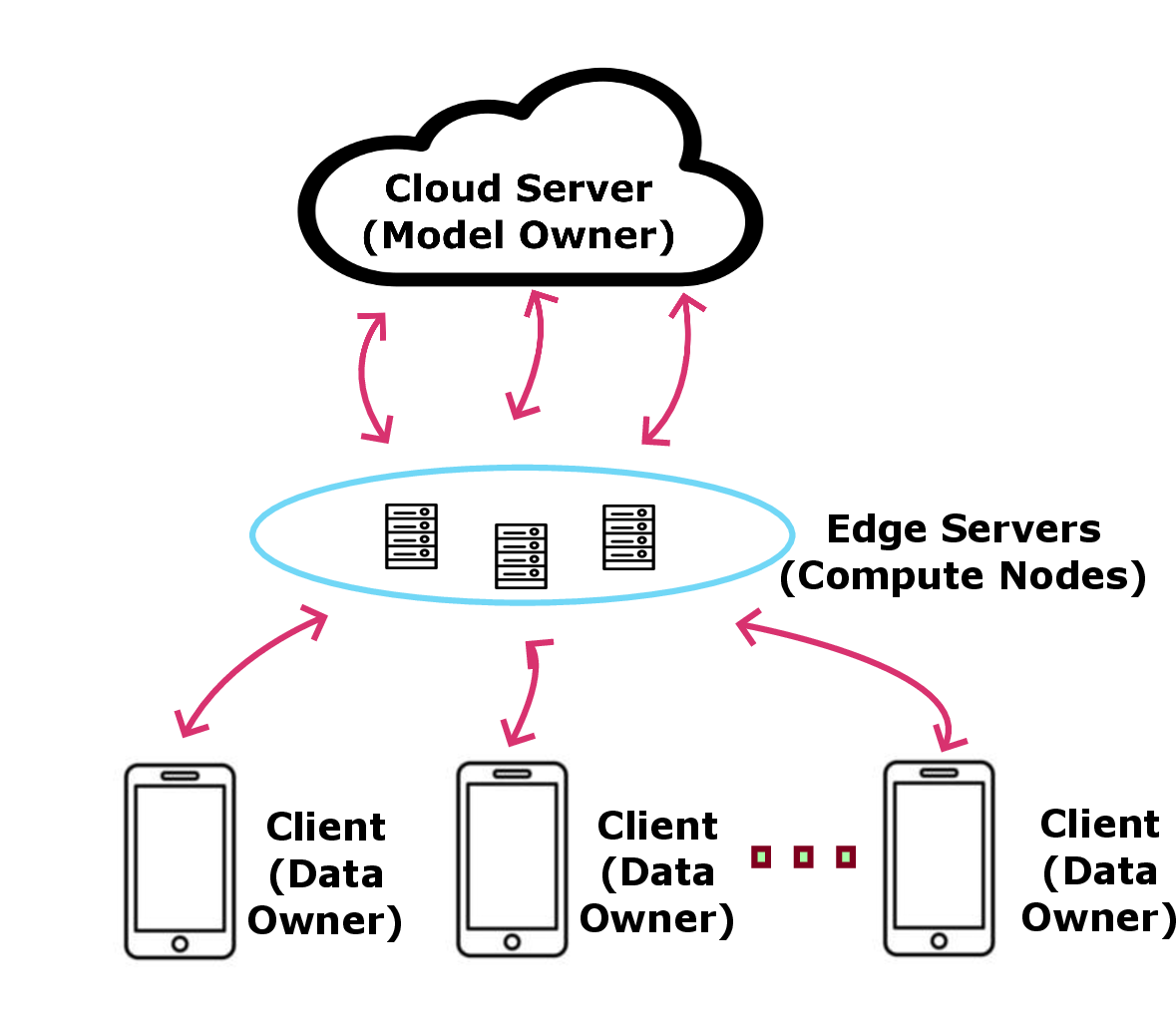}
}
%\vspace{-8pt}
\caption{Hierarchical ML inference.}
\label{fig:3Party}
%\vspace{-15pt}
 \end{figure}

A promising solution is a hierarchical setup, where clients own/generate data, model owners (cloud servers) have a pre-trained ML model, and compute nodes (edge servers) perform ML inference on clients' data using the cloud server's ML model, Fig. \ref{fig:3Party}. This approach advocates that edge servers perform ML inference by preserving the privacy of data from the client's perspective and ML model from the cloud server's perspective \cite{secureml, securenn, wagh2020falcon}. Hierarchical ML inference is promising to reduce the communication overhead between the client and cloud server by confining the communication cost between the client and edge servers, noting that data owners and compute nodes can communicate over high-speed edge links as they are geographically close. 

Despite the promise, the potential of hierarchical ML inference has not yet been fully explored in terms of utilizing available resources in edge servers. 
%and \textcolor{violet}{reducing the communication overhead to/from the cloud server}. 
In this work, we consider (i) model-distributed inference (model parallelization) at the edge servers to speed up ML inference and (ii) reducing the amount of communication to/from the cloud server while preserving the privacy of both data and model. 
%%
%\textcolor{blue}{SecureNN and Falcon don't have any communication with the cloud. Maybe we can add something about the generalization of the non-linear part or the setup to have a heavy offline phase and a lightweight online phase. Neither SecureNN nor Falcon handles collusion among servers.}
%%

%(i) model parallelization at the compute nodes and (ii) the communication overhead to/from the model owner is still high as they are not designed to minimize the communication overhead. 

Model-distributed inference is emerging as a promising solution \cite{EdgePipe-hu2021pipeline, 10138654}, where an ML model is distributed across edge servers, Fig. \ref{fig:MDI}. The client transmits its data to an edge server, which processes a few layers of an ML model and transmits the feature vector of its last layer/block to the next edge server. Each edge server that receives a feature vector processes the layers that are assigned to it. The edge server that calculates the last layers of the ML model obtains and sends the output back to the client. We note that the edge servers perform parallel processing by data pipelining, so ML inference becomes faster. We consider that the edge servers in Fig. \ref{fig:3Party} could employ model parallelization for faster ML inference, but the crucial question of how to provide privacy to both data and the model should be addressed, which is the focus of this paper. 

\begin{figure}
\centerline{\includegraphics[width=2in]{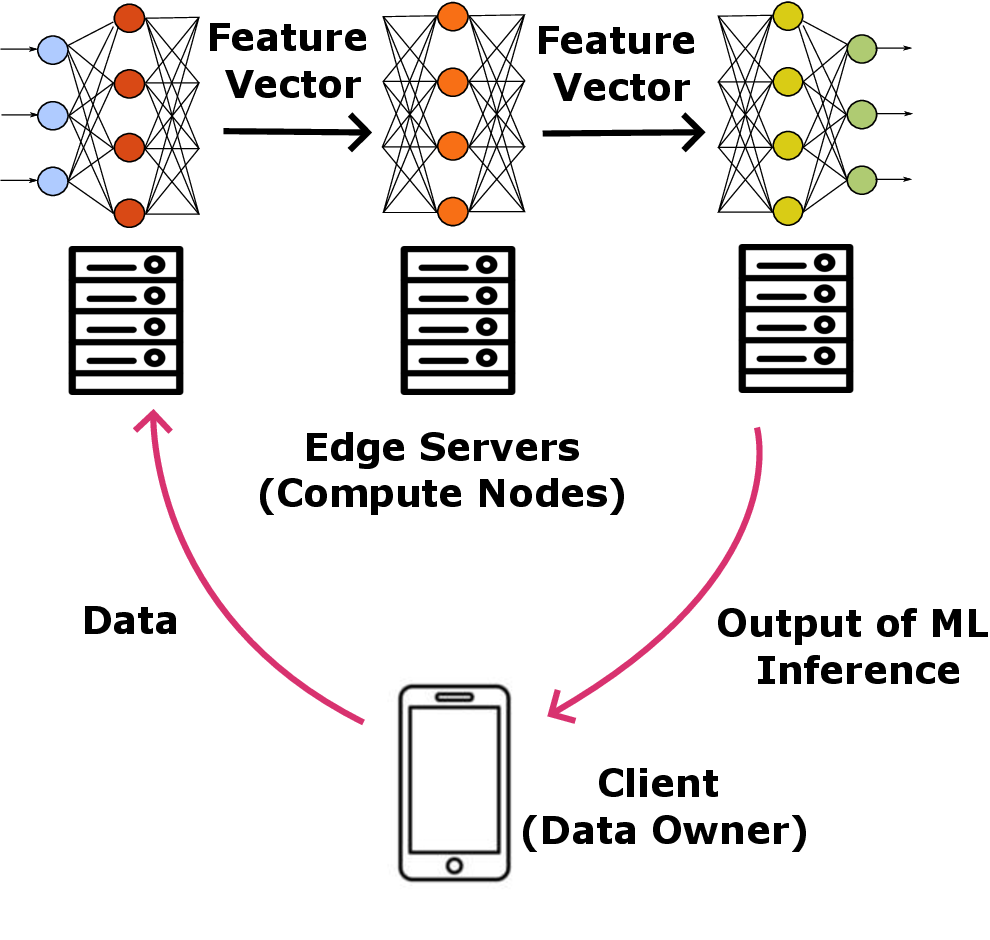}
}
%\vspace{-10pt}
\caption{Model-distributed inference.}
\label{fig:MDI}
%\vspace{-15pt}
 \end{figure}

In this paper, we design a privacy-preserving hierarchical model-distributed inference, \pmdi protocol. \pmdi uses hierarchical ML inference demonstrated in Fig. \ref{fig:3Party}. Similar work has been explored in the literature \cite{secureml, securenn, wagh2020falcon} but did not consider model-distributed inference, which is one of our novelties. In particular, compute nodes process parts of the ML model in parallel (thanks to pipelining) to reduce ML inference delay. 
Furthermore, we structure \pmdi in two phases: offline and online. The offline and online phases are designed in a way that the communication overhead of the online phase is small. Indeed, the communication cost of the online phase to/from the cloud server is zero in \pmdi. This significantly reduces the ML inference time of \pmdi. Our \pmdi design uses additive secret sharing and linearly homomorphic encryption to handle linear calculations in the ML inference, and garbled circuit and three-party oblivious transfer are used to handle non-linear functions (such as ReLU). The following are our contributions.

\begin{itemize}
    \item We design an ML inference protocol \pmdi that uses model-distributed inference to speed up ML inference and employs a hierarchical setup to reduce the communication overhead while providing privacy to both data and model. To the best of our knowledge, \pmdi is the first privacy-preserving model-distributed inference protocol in a hierarchical ML setup. 
    
    \item \pmdi consists of offline and online phases. We designed these phases in a way that most of the data exchange (hence communication) is done in the offline phase, which is done anytime before the online phase as it is independent of the client's data. Thus, the communication overhead of the online phase is reduced. Indeed, there is no communication to/from the cloud server in the online phase, and the amount of communication between the client and edge servers is minimized.
    
    \item Our \pmdi design uses additive secret sharing and linearly homomorphic encryption to handle linear calculations in the ML inference, and garbled circuit and three-party oblivious transfer are used to handle non-linear functions. \pmdi uses additive secret sharing with homomorphic encryption in the offline phase, which reduces the number of computations in the online phase significantly. Our novel three-party OT design, inspired by PROXY-OT protocol from \cite{naor1999privacy}, reduces the complexity of the existing OT protocols \cite{delphi, chameleon} and provides information-theoretic security.
    
    \item We implemented \pmdi as well as baselines using ACCESS computing platform \cite{NSFACCESS} and our in-lab computers. The experimental results demonstrate that \pmdi significantly improves the ML inference time as compared to the baselines. 
\end{itemize}

%\vspace{-3pt}
\section{Related Work} \label{sec:related_work}
%\vspace{-3pt}

Privacy-preserving ML inference protocols can be roughly classified into two categories; (i) client/server protocols and (ii) hierarchical protocols as in Fig. \ref{fig:3Party}. \footnote{We note that we focus on privacy in this paper, so we do not consider active adversaries and do not include the related literature on active adversaries. In a similar spirit, we also exclude most of the protocols using trusted execution environments (TEEs), as our work does not rely on TEEs.}

%, hence we consider semi-honest parties who follow protocols. We do not include the extensive literature on active adversaries in this paper.}

\textbf{Client/server protocols.}
CryptoNets \cite{gilad2016cryptonets} pioneered the client/server protocols by using leveled homomorphic encryption (HE) and polynomial approximations for activation functions. CryptoDL \cite{hesamifard2017cryptodl} improved upon CryptoNets with better function approximations, while LoLa \cite{brutzkus2019low} focused on reducing latency in HE-based inference. In contrast, nGraph-HE \cite{boemer2019ngraph} avoids such approximations by passing feature maps to the data owner for clear text processing, raising concerns about ML model parameter privacy.

To address the challenges of ML model exposure and using approximation for the activation functions, subsequent studies combine HE with multi-party computation (MPC) techniques like oblivious transfer (OT), garbled circuits (GC), and secret sharing. For instance, MiniONN \cite{oblivious} and CrypTFlow2 \cite{rathee2020cryptflow2} integrate HE with MPC to enhance performance and accuracy. CrypTFlow2 uses OT or HE for linear layers and specialized MPC protocols for ReLU and Maxpool to handle non-linear activation functions. Similarly, Gazelle \cite{gazelle}, Delphi \cite{delphi}, AutoPrivacy \cite{lou2020autoprivacy}, and MP2ML \cite{boemer2020mp2ml} employ a combination of HE and garbled circuits to optimize both computational efficiency and privacy, using secret sharing for linear layers and GC for non-linear layers. Meanwhile, Cheetah \cite{huang2022cheetah} streamlines the process by eliminating costly HE operations, such as rotations in linear layer evaluations, and instead uses efficient OT-based protocols for handling non-linearities.

In parallel, a growing body of work aims to sidestep the high complexity of traditional cryptographic techniques like HE and GC by discretizing DNNs \cite{agrawal2019quotient, riazi2019xonn}. For example, XONN \cite{riazi2019xonn} simplifies the secure computation landscape by leveraging binary neural networks (BNNs), which operate on binary values (+1 or -1). This approach significantly reduces the cryptographic burden by focusing primarily on efficient XNOR operations. Similarly, COINN \cite{hussain2021coinn} utilizes advanced quantization techniques to lower both communication and computational overhead at the expense of accuracy loss.

Despite these advancements, client/server models often encounter high latency issues due to the bottleneck between the client and cloud server, especially when the participating parties are geographically distant—a common scenario in real-world applications. This latency can significantly affect the performance and feasibility of ML inference. Thus, we consider a hierarchical model in our setup, which differentiates us from client/server protocols. SECO \cite{chen2024seco} builds upon Delphi and splits an ML model into two parts, where one of them is processed at the client while the other part is processed at the cloud server, but it does not 
%Although the approach of \cite{chen2024seco} is similar to our model-distribution idea, it does not 
support model splitting and distribution over multiple (edge) servers. 
%HS: NOT CLEAR WHAT THE FIRST AND SECOND PARTS ARE.\textcolor{blue}{where the first part is handled by the same setup as Delphi, and the second part is handled via a getaway server and two remote servers.}
%\textcolor{blue}{I meant the first and second splits of the ML model.}

%In parallel to our work, SECO \cite{chen2024seco} builds upon Delphi. Their setup comprises a client and three servers (a getaway server and two remote servers). They split the model into two parts. The client and the getaway server handle the first part the same as Delphi, the getaway server (the same as Delphi), and the remote servers deal with the second part (no interaction with the client). For the second part, they secret-share the model across the remote servers and use multi-party HE and GC. They provide privacy against the dishonest majority of semi-honest servers.

%, motivating ongoing research into more efficient cryptographic protocols and network strategies to mitigate these challenges. 

\textbf{Hierarchical protocols:} Hierarchical protocols for ML inference with the participation of three or more parties have been explored in \cite{secureml, chameleon, securenn, aby3} usually assuming that a client has data and multiple servers (edge and cloud) privately access or hold the ML model. 
%These works use hybrid frameworks of HE with MPC. 
ABY\textsuperscript{3} \cite{aby3} is a three-party framework that efficiently transitions between arithmetic, binary, and Yao's three-party computation (3PC), using a three-server model that tolerates a single compromised server. 
%AriaNN \cite{ryffel2020ariann} is a 2PC protocol that uses a trusted third party in the offline phase, employing function secret sharing for the non-linear part and Beaver's triplets for the linear layers. 
Cryptflow \cite{kumar2020cryptflow} introduces a three-party protocol built upon SecureNN \cite{securenn}, tolerating one corruption and optimizing convolution for reduced communication in non-linear layers. Falcon \cite{wagh2020falcon} combines techniques from SecureNN and ABY\textsuperscript{3} to improve protocol efficiency. SSNet \cite{duan2024ssnet} introduces a private inference protocol using Shamir's secret sharing scheme to handle collusion among multiple parties while providing privacy for both data and the ML model. %, and it can be applied in the hierarchical setup. 

\emph{Our work in perspective.} As compared to existing hierarchical protocols summarized above, our approach (i) uses additive secret sharing with HE to provide privacy for ML model parameters, which reduces the number of computations in the online phase significantly (for example, as compared to SSNet \cite{duan2024ssnet}); (ii) eliminates the need for continuous and interactive communication with the client during the online phase, thus reducing the communication overhead; (iii) uses model-distributed inference, which enables full utilization of available edge servers and their computing power, and (iv) works with any non-linear ML inference functions. % rather than some specific ones. 

\section{System Model \& Preliminaries} \label{sec:system}
%\vspace{-3pt}

\textbf{Notations.} We define $\mathbb{Z}_q$ as a finite field of size $q$, and $\mathbf{x} \in \mathbb{Z}_q^n$ as a vector of size $n$ over the field $\mathbb{Z}_q$. Similarly, $\mathbf{x} \in \{0,1\}^n$ is defined as a binary vector of length $n$. We will denote vectors by bold lowercase letters, while matrices are denoted by bold uppercase letters.

\textbf{Setup.} We consider a three-party ML inference, which includes a cloud server (model owner), client (data owner), and edge servers (compute nodes), Fig. \ref{fig:3Party}. Clients and edge servers are directly connected, where high-speed device-to-device links can be used. Client and cloud servers can also communicate using infrastructure-based links such as Wi-Fi or cellular, which is typically considered a bottleneck in today's communication networks. This paper aims to reduce the communication cost between clients and the remote cloud server. 

Our system supports multiple clients, as shown in Fig. \ref{fig:3Party}, but we will focus on only one client in the rest of the paper for the sake of clarity and as the multiple client extension is straightforward. The edge servers are divided into clusters. Each cluster consists of $T+2$ edge servers, two of which are garbler and evaluator servers, which are needed to implement the garbled circuit operations, detailed later in this section. Each cluster is responsible for processing a set of layers according to model-distributed inference (model-distribution algorithm), i.e., each cluster ($T+2$ edge servers) computes a part of the model. We will present the details of cluster operation in Section \ref{sec:privateMDI}. 

\textbf{Threat Model.} We consider that all the participants are semi-honest, i.e., they follow the protocols, but they are curious. Edge servers in each cluster may collude; we consider that maximum $T$ edge servers (out of $T+2$) collude to obtain data. We assume that the evaluator and the garbler servers do not collude. Also, clients, edge servers, and the cloud server do not collude. Our aim is to design a privacy-preserving model distributed inference mechanism at the edge servers where (i) the client and edge servers do not learn anything about the model\footnote{We note that the client knows the number of non-linear layers and their dimensions in the ML mode, which is needed in our \pmdi design, but not the whole architecture of the ML model, nor the specific weights of the ML model.} and (ii) the cloud server and edge servers do not learn anything about the client's data.

\textbf{ML Model and Model-Distribution.} We consider that the cloud server stores a pre-trained ML model. We assume that the ML model can be partitioned, which applies to most of the ML models used in today's ML applications \cite{10138654}. The ML model could have any linear and/or non-linear operations. Our mechanism is designed to work with any such operations. 

Different clusters of edge servers may have different and time-varying computational capacities. Thus, a cluster with higher computing power should process more layers than the others. We follow a similar approach of AR-MDI proposed in \cite{10138654} to achieve such an adaptive model-distributed inference. 
AR-MDI \cite{10138654} is an ML allocation mechanism that determines the set of layers $\Lambda_n(k)$ that should be activated at edge server $n$ for processing data $A_k$. AR-MDI allocates $\lfloor{\rho_n(k)}\rceil$ layers to edge server $n$ for data $A_k$, where $\lfloor{.}\rceil$ rounds $\rho_n(k)$ to the closest integer that is a feasible layer allocation. AR-MDI determines $\rho_n(k)$ as $\rho_n(k) = W \frac{1/\gamma_n(k)}{\sum_{m=0}^{N-1} 1 / \gamma_m(k)}$, 
% \begin{align}
%     \rho_n(k) = W \frac{1/\gamma_n(k)}{\sum_{m=0}^{N-1} 1 / \gamma_m(k)}, 
% \end{align} 
where $W$ is the total number of parameters in the ML model, $\gamma_n(k)$ is the per parameter computing delay. AR-MDI performs layer allocation decentralized as each worker can determine their share of layers by calculating $\lfloor{\rho_n(k)}\rceil$. The per parameter computing delay $\gamma_n(k)$ can be measured by each worker and shared with other workers. We note that we will use the AR-MDI's model allocation mechanism over our clusters of edge servers and by providing privacy for both data and model.

\textbf{Three-Party Oblivious Transfer.} Oblivious Transfer (OT) protocols \cite{naor1999oblivious, rivest1999unconditionally} typically consider two-party setup with a ``sender'' and ``receiver''. The idea is that the sender has two binary string inputs $\mathbf{k}_0$ and $\mathbf{k}_1$, and the receiver would like to learn $\mathbf{k}_i$, $i \in \{0,1\}$, but (i) the sender should not learn which input is selected, and (ii) the receiver learns only $\mathbf{k}_i$ and gains no information about $\mathbf{k}_{1-i}$. In our \pmdi design, we need to use three-party OT \cite{naor1999privacy} as detailed in Section \ref{sec:privateMDI}.

In particular, we design a novel three-party OT inspired by PROXY-OT introduced in \cite{naor1999privacy} providing information-theoretic privacy. In our three-party protocol shown in Fig. \ref{fig:proxy-ot}, the three parties are the client, evaluator server, and garbler server. The client has a one-bit input $i$, and the garbler server has the input labels $\mathbf{k}_0$ and $\mathbf{k}_1$. Additionally, the evaluator server and the client generate random sample $b$ using a pseudorandom generator with the same seed, and the evaluator and garbler server generate random samples $\mathbf{u}_0$ and $\mathbf{u}_1$ similarly. Here, $b, i \in \{0, 1\}, \mathbf{u}_j, \mathbf{k}_j \in \{0, 1\}^\kappa, j = 0, 1$.
First, the client sends its masked input $i \oplus b$ to the garbler server. Next, the garbler server sends the masked labels to the client. The client sends $\mathbf{k}_i \oplus \mathbf{u}_b$ to the evaluator server. The evaluator server can unmask the received label to obtain $\mathbf{k}_i$. The privacy proof of our three-party OT protocol is provided in Appendix A.

%\vspace{-3pt}

\begin{figure}[h]
    \centering
    \resizebox{\columnwidth}{!}{
        \begin{tikzpicture}[
            node distance=1.5cm and 1.5cm,
            >=Stealth,
            every node/.style={font=\footnotesize}
            ]

            % Define positions of nodes
            \node (eval) at (0.5,0) {\textbf{Evaluator Server}};
            \node[below of=eval, node distance=0.5cm] (eval-inputs) {$\mathbf{u}_0,\mathbf{u}_1, b$};
            \node (garbler) at (4.5,0) {\textbf{Garbler Server}};
            \node[below of=garbler, node distance=0.5cm] (garbler-inputs) {$\mathbf{k}_0,\mathbf{k}_1,\mathbf{u}_0,\mathbf{u}_1$};
            \node (client) at (8,0) {\textbf{Client}};
            \node[below of=client, node distance=0.5cm] (client-inputs) {$i,b$};

            % Draw arrows for messages
            \draw[->] (client) (8,-1.5) -- (4.5,-1.5) (garbler) node[midway, above] {$i \oplus b$};
            \draw[->] (garbler) (4.5,-2.25) -- (8,-2.25) (client) node[midway, above] {$\mathbf{k}_0 \oplus \mathbf{u}_{i \oplus b}$} node[midway, below] {$\mathbf{k}_1 \oplus \mathbf{u}_{\overline{i \oplus b}}$};
            \draw[->] (client) (8,-3) -- (0.5,-3) (eval) node[midway, above] {$\mathbf{k}_i \oplus \mathbf{u}_b$};

            % Draw a box around the entire protocol
            \draw[thick] (-1,0.5) rectangle (9,-3.5);

        \end{tikzpicture}
    }
    \caption{Our novel three-party OT protocol.}
    \label{fig:proxy-ot}
    %\vspace{-8pt}
\end{figure}
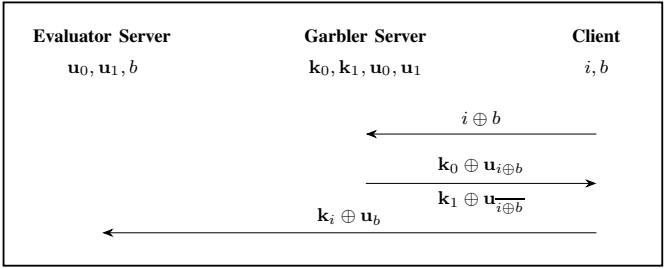

\textbf{Garbled Circuit.} Garbled Circuit (GC) is a technique for encoding a boolean circuit $C$ and its inputs $\mathbf{x}$ and $\mathbf{y}$ in a way that allows an evaluator to compute the output $C(\mathbf{x},\mathbf{y})$ without revealing any information about $C$ or $\mathbf{x}$ and $\mathbf{y}$ other than the output itself. This process involves a garbling scheme consisting of algorithms for encoding and evaluating the circuit, ensuring completeness (the output matches the actual computation) and privacy (the evaluator learns nothing beyond the output and the size of the circuit).

A garbling scheme \cite{Yao1986HowTG, bellare2012foundations} is a tuple of algorithms \texttt{GS $=$ (Garble,Eval)} with the following syntax:
\begin{itemize}
    \item $(\Tilde{C},\{\mathbf{k}_{0}^j,\mathbf{k}_{1}^j\}_{j \in [2n]}) \leftarrow$ \texttt{GS.Garble$(1^\lambda,C)$}. Given a security parameter $\lambda$ and a boolean circuit $C$ as input, \texttt{Garble} produces a garbled circuit $\Tilde{C}$ and a set of labels $\{\mathbf{k}_{0}^j,\mathbf{k}_{1}^j\}_{j \in [2n]}$. Here, $\mathbf{k}_{i}^j$ represents assigning the value $i \in \{0,1\}$ to the $j$-th input label.
    
    \item Given a garbled circuit $\Tilde{C}$ and labels $\{\mathbf{k}_{\mathbf{x}_j}^j\}_{j \in [n]}$ corresponding to an input $\mathbf{x} \in \{0,1\}^n$, and labels $\{\mathbf{k}_{\mathbf{y}_j}^{n+j}\}_{j \in [n]}$ corresponding to an input $\mathbf{y} \in \{0,1\}^n$, \texttt{Eval} outputs a string $C(\mathbf{x},\mathbf{y})$.
\end{itemize}

\emph{Correctness:} For correctness, we require that for every circuit $C$ and inputs $\mathbf{x},\mathbf{y} \in \{0,1\}^n$ the output of \texttt{Eval} must equal $C(\mathbf{x},\mathbf{y})$. Formally, $\Prob\left[C(\mathbf{x},\mathbf{y}) = \texttt{Eval}\left(\Tilde{C},\{\mathbf{k}_{\mathbf{x}_j}^j,\mathbf{k}_{\mathbf{y}_j}^{n+j}\}_{j \in [n]}\right)\right] = 1$.
% \begin{equation*}
%     \Prob\left[C(\mathbf{x},\mathbf{y}) = \texttt{Eval}\left(\Tilde{C},\{\mathbf{k}_{\mathbf{x}_j}^j,\mathbf{k}_{\mathbf{y}_j}^{n+j}\}_{j \in [n]}\right)\right] = 1.
% \end{equation*}

\emph{Security:} For security, we require that a simulator $\mathcal{S}_{\textnormal{GS}}$ exists such that for any circuit $C$ and inputs $\mathbf{x},\mathbf{y} \in \{0,1\}^n$, we have $\left( \Tilde{C},\{\mathbf{k}_{\mathbf{x}_j}^j,\mathbf{k}_{\mathbf{y}_j}^{n+j}\}_{j \in [n]} \right) \approx_c \mathcal{S}_{\text{GS}}\left( 1^\lambda,1^{|C|},C(\mathbf{x},\mathbf{y}) \right)$.  
% \begin{equation*}
%     \left( \Tilde{C},\{\mathbf{k}_{\mathbf{x}_j}^j,\mathbf{k}_{\mathbf{y}_j}^{n+j}\}_{j \in [n]} \right) \approx_c \mathcal{S}_{\text{GS}}\left( 1^\lambda,1^{|C|},C(\mathbf{x},\mathbf{y}) \right).
% \end{equation*}

Our \pmdi design involves three parties: the garbler with input $\mathbf{x}$, the evaluator without any input, and the client with input $\mathbf{y}$. For the evaluator to compute $C(\mathbf{x},\mathbf{y})$ without any party gaining information about other parties' inputs, conventional two-party OT is substituted with our three-party OT protocol.

\textbf{Linearly Homomorphic Encryption.} We use Linearly homomorphic encryption (LHE) in our \pmdi design, which enables certain computations on encrypted data without the need for decryption \cite{paillier1999public, gentry2009fully}. In LHE, operations on ciphertexts correspond to linear operations on the plaintexts they encrypt. A linearly homomorphic encryption scheme is characterized by four algorithms, collectively denoted as \texttt{HE = (KeyGen,Enc,Dec,Eval)}, which can be described as
\begin{itemize}
    \item \texttt{HE.KeyGen} is an algorithm that generates a public key \texttt{pk} and a secret key \texttt{sk} pair.
    
    \item \texttt{HE.Enc(pk,$m$)} encrypts the message $m$ using the public key \texttt{pk} and outputs a ciphertext $ct$. The message space is a finite field $\mathbb{Z}_q$.
    
    \item \texttt{HE.Dec(sk,$ct$)} decrypts the ciphertext $ct$ using the secret key \texttt{sk} and outputs the message $m$.
    
    \item \texttt{HE.Eval(pk,$ct_1$,$ct_2$,$f$)} outputs the encrypted $f(m_1,m_2)$ using the public key \texttt{pk} on the two ciphertexts $ct_1$ and $ct_2$ encrypting messages $m_1$ and $m_2$, where $f$ is a linear function.
\end{itemize}

Let $ct_1 = \texttt{HE.Enc}(pk, m_1)$, $ct_2 = \texttt{HE.Enc}(pk, m_2)$, $ct' = \texttt{HE.Eval}(pk, ct_1, ct_2, f)$. We require \texttt{HE} to satisfy the following properties:
\begin{itemize}
    \item \emph{Correctness:} \texttt{HE.Dec} outputs $m$ using \texttt{sk} and a ciphertext $ct =$ \texttt{HE.Enc(pk,$m$)}.

    \item \emph{Homomorphism:} \texttt{HE.Dec} outputs $f(m_1,m_2)$ using \texttt{sk} and a ciphertext $ct' = \texttt{HE.Eval}(pk, ct_1, ct_2, f)$.
    
    \item \emph{Semantic security:} Given a ciphertext $ct$ and two messages of the same length, no attacker should be able to tell which message was encrypted in $ct$.
    
    \item \emph{Function privacy:} Given a ciphertext $ct$, no attacker can tell what homomorphic operations led to $ct$. Formally, $\left( ct_1,ct_2,ct' \right)
        \approx_c
        \mathcal{S}_{\textnormal{FP}}(1^\lambda,m_1,m_2,f(m_1,m_2))$.
    % \begin{equation*}
    %     \left( ct_1,ct_2,ct' \right)
    %     \approx_c
    %     \mathcal{S}_{\textnormal{FP}}(1^\lambda,m_1,m_2,f(m_1,m_2)).
    %\end{equation*}
\end{itemize}

\textbf{Additive Secret Sharing.} An $m$-of-$m$ additive secret sharing scheme over a finite field $\mathbb{Z}_q$, splits a secret $\mathbf{x} \in \mathbb{Z}_q$ into a $m$-element vector $([\mathbf{x}]_1, \ldots , [\mathbf{x}]_m) \in \mathbb{Z}_q^m$. The scheme consists of a pair of algorithms \texttt{ADD = (Shr,Re)}, where \texttt{Shr} corresponds to share, and \texttt{Re} refers to reconstruct; 
\begin{itemize}
    \item $(\mathbf{a}_1,\ldots,\mathbf{a}_m) \leftarrow \texttt{ADD.Shr}(\mathbf{x},m)$. On inputs a secret $\mathbf{x} \in \mathbb{Z}_q$ and number of shares $m$, \texttt{Shr} samples $(m-1)$ values $\mathbf{a}_1,\ldots,\mathbf{a}_{m-1}$ and computes $\mathbf{a}_m = \mathbf{x} - \sum_{j = 1}^{m-1} \mathbf{a}_j$. Then it outputs the shares $(\mathbf{a}_1,\ldots,\mathbf{a}_m)$.

    \item The \texttt{ADD.Re} algorithm on input a sharing $(\mathbf{a}_1,\ldots,\mathbf{a}_m)$, computes and outputs $\mathbf{x} = \sum_{j = 1}^m \mathbf{a}_j$.
\end{itemize}

\emph{Correctness:} Let \( \mathbf{x} \in \mathbb{Z}_q \) be a secret. Then: $\texttt{ADD.Re}(\texttt{ADD.Shr}(\mathbf{x})) = \mathbf{x} $. 

\emph{Security:} Let $(\mathbf{a}_1^1, \ldots, \mathbf{a}_m^1) \leftarrow \texttt{ADD.Shr}(\mathbf{x}_1, m)$ and $(\mathbf{a}_1^2, \ldots, \mathbf{a}_m^2) \leftarrow \texttt{ADD.Shr}(\mathbf{x}_2, m)$ be the secret shares of two secrets, $\mathbf{x}_1$ and $\mathbf{x}_2$, respectively. Here, the $j$-th party possesses the shares $\mathbf{a}_j^1$ and $\mathbf{a}_j^2$. For any $j$, the shares $\mathbf{a}_j^1$ and $\mathbf{a}_j^2$ are identically distributed. Consequently, each share is indistinguishable from a random value to the $j$-th party, ensuring no information about $\mathbf{x}_1$ or $\mathbf{x}_2$ is leaked.

%\vspace{-3pt}
\section{Privacy-Preserving Model-Distributed Inference (\pmdi) Design} \label{sec:privateMDI}
%\vspace{-2pt}

The objective of our \pmdi protocol is to compute the inference result of the client's input data $\mathbf{x} \in \mathbb{Z}_q^n$ using an ML model $\mathbf{M}(.)$ and give output $\mathbf{M}(\mathbf{x})$ in a privacy-preserving manner. The ML model consists of $L$ layers and is represented by $\mathbf{M} = (\mathbf{M}_1,\ldots, \mathbf{M}_L)$ by abuse of notation.

\begin{algorithm}[ht]
 \begin{algorithmic}[1]
    \Input The cloud server has the ML model parameters $\mathbf{M} = (\mathbf{M}_1, \ldots, \mathbf{M}_L)$.
    %\Statex \underline{\textbf{Initialization:}}
        \State  The client generates \texttt{(pk,sk)} using \texttt{HE.KeyGen}.
        
        \State  The client generates random sample $\mathbf{r}_1$ over $\mathbb{Z}_q^n$, and sends \texttt{HE.Enc(pk,$\mathbf{r}_{1}$)} to the cloud server.
        
        \State  The cloud server initializes \( \mathbf{c}_1^{enc} \gets \text{\texttt{HE.Enc(pk,$\mathbf{r}_{1}$)}} \).
        
    \For {$j \in [1, \ldots , P]$}
        \For {$l \in [l_{start}^j, \ldots ,l_{end}^j]$}
            %\Statex \textbf{Cloud server generates the following}
            \State Cloud server generates random sample $\mathbf{s}_l \in \mathbb{Z}_q^n$,  secret shares $([\mathbf{M}_l]_1$,..., $[\mathbf{M}_l]_{T + 1})$ $\gets$ $\texttt{ADD.Shr}$ $(\mathbf{M}_l,T+1)$ and  $([\mathbf{s}_l]_1$, $\ldots$ , $[\mathbf{s}_l]_{T + 1})$ $\gets$ $\texttt{ADD.Shr}(\mathbf{s}_l,T+1)$.
            \State The cloud server sends $[\mathbf{M}_l]_v$ and $[\mathbf{s}_l]_v$ to the edge server $v$, $v \in [v^j_1,\ldots,v^j_{T+1}]$.
            \State The cloud server encrypts: $\mathbf{M}_l^{enc} \gets $ \texttt{HE.Enc(pk,$\mathbf{M}_l$)}, $\mathbf{s}_l^{enc} \gets $ \texttt{HE.Enc(pk,$\mathbf{s}_l$)}.
            \State The cloud server computes $\mathbf{M}_l^{enc} \cdot \mathbf{c}_l^{enc} + \mathbf{s}_l^{enc}$.
            \If {$l$ only has linear operations}
                \State The cloud server determines $\mathbf{c}_{l+1}^{enc} \gets \mathbf{M}_l^{enc} \cdot \mathbf{c}_l^{enc} + \mathbf{s}_l^{enc}$.
            \Else 
                \State The cloud server sends $\mathbf{M}_l^{enc} \cdot \mathbf{c}_l^{enc} + \mathbf{s}_l^{enc}$ to the client.
                \State The client sends \texttt{HE.Enc(pk,$\mathbf{r}_{l+1}$)} to the cloud server.
                \State The cloud server determines $\mathbf{c}_{l+1}^{enc} \gets $ \texttt{HE.Enc(pk,$\mathbf{r}_{l+1}$)}.
                \State The client, garbler, and evaluator server run Algorithm \ref{alg:non-linear_offline_algorithm}.
            \EndIf
        \EndFor
    \EndFor
    \If {$L$ only has linear operations}
        \State The cloud server sends $\mathbf{c}_{L+1}^{enc}$ to the client.
    \EndIf
 \end{algorithmic}
 \caption{Offline \pmdi operation} \label{alg:overall_offline_algorithm}
\end{algorithm}

Our \pmdi protocol is divided into two phases: offline and online, similar to some previous work \cite{secureml, chameleon, aby3, oblivious, gazelle, delphi}. The goal behind having two phases is to move the computationally intensive aspects to the offline phase and make the online phase, which calculates the ML model output, faster. Also, the offline phase is designed to minimize the communication overhead between the client and the cloud server, which is the bottleneck link in the online phase. The offline phase (i) exchanges keys between the client and cloud server, (ii) shares the secret ML model with the edge servers, and (iii) exchanges the garbled circuits and two out of three inputs of the garbled circuits labels. We note that the offline phase does not use and is independent of the client's data $\mathbf{x}$, but makes the system ready for the online phase, i.e., computing the ML inference $\mathbf{M}(\mathbf{x})$ in a privacy-preserving manner. 

We note that an ML model consists of linear and non-linear layers. In our \pmdi design, we use additive secret sharing, linear homomorphic encryption for linear operations, garbled circuit, and three-party oblivious transfer for non-linear operations. As we mentioned earlier, our \pmdi design is generic enough to work with any non-linear functions. Next, we will describe the details of the offline and online parts of the \pmdi.

\subsection{Offline \pmdi} \label{subsec:overall_alg_offline}
In this section, we describe the details of the offline \pmdi protocol. The overall operation is summarized in Algorithm \ref{alg:overall_offline_algorithm}. 

The cloud server has the ML model parameters $\mathbf{M} = (\mathbf{M}_1,\ldots, \mathbf{M}_L)$. In the initialization phase of the algorithm (lines 1-2), the client generates a pair of \texttt{(pk,sk)} using \texttt{HE.KeyGen} and a random sample $\mathbf{r}_1$ over $\mathbb{Z}_q^n$, and sends \texttt{HE.Enc(pk,$\mathbf{r}_1$)} to the cloud server. We note that $\mathbf{r}_l,\mathbf{s}_l \in \mathbb{Z}_q^n$ are random samples of length $n$ generated by the client and cloud server for ML model layer $l$, respectively. The cloud server initializes $\mathbf{c}_1^{enc}$ with the received ciphertext $\mathbf{r}_1$ from the client according to \( \mathbf{c}_1^{enc} \gets \text{\texttt{HE.Enc(pk,$\mathbf{r}_{1}$)}} \).

Next, the offline protocol shares encrypted model parameters and keys (garbles circuit labels) for each cluster of edges, where there are $P$ clusters. We note that there are $T+2$ edge servers in each cluster. Each cluster $j$ processes the set of layers $[l_{start}^j, \ldots ,l_{end}^j]$ that are assigned to cluster $j$ employing the adaptive model-distribution algorithm AR-MDI described in Section \ref{sec:system}, where $l_{start}^j$ and $l_{end}^j$ are the first and last assigned layers to cluster $j$. The details of the offline protocol for linear and non-linear ML operations are provided in the following.  

\begin{algorithm}[ht]
 \begin{algorithmic}[1]
    \Input Circuit $C$.
    \State The garbler server runs the \texttt{GS.Garble} algorithm on circuit $C$ and outputs $\Tilde{C}$ and the garbled labels.
    \State The garbler server sends $\Tilde{C}$ to the evaluator server.
    \State The garbler server, evaluator server, and client run the three-party OT algorithm.
    \State The evaluator server receives the garbled labels of $\mathbf{M}_l \cdot \mathbf{c}_l + \mathbf{s}_l$ and $\mathbf{r}_{l+1}$.
 \end{algorithmic}
 \caption{Garbled circuit and oblivious transfer operation in a cluster in the \emph{offline} phase.} \label{alg:non-linear_offline_algorithm}
\end{algorithm}

The cloud server generates random sample $\mathbf{s}_l$ over $\mathbb{Z}_q^n$, calculates $T + 1$ additive secret shares of the model parameters $\mathbf{M}_l$ and the random samples $\mathbf{s}_l$ (line 6). We note that $\mathbf{s}_l$ provides privacy for the ML model parameters against client and edge servers. Then, it sends each share ($[\mathbf{M}_l]_v$ and $[\mathbf{s}_l]_v$) to the edge server $v$ of cluster $j$ (line 7). We note that $v$ is any edge server in cluster $j$ such that $v \in [v^j_1,\ldots,v^j_{T+1}]$ excluding the evaluator server. 
%to which layer $l$ is allocated (all the edge servers except for the evaluator server). 
%
The cloud server encrypts $\mathbf{M}_l$ and $\mathbf{s}_l$ and determines  the enycrpted versions $\mathbf{M}_l^{enc} \gets $ \texttt{HE.Enc(pk,$\mathbf{M}_l$)}, $\mathbf{s}_l^{enc} \gets $ \texttt{HE.Enc(pk,$\mathbf{s}_l$)} (line 8).
%
%$\mathbf{M}_l^{enc}$ and $\mathbf{s}_l^{enc}$.
%
%
The cloud server computes $\mathbf{M}_l^{enc} \cdot \mathbf{c}_l^{enc} + \mathbf{s}_l^{enc}$ by executing the \texttt{HE.Eval} algorithm on $\mathbf{M}_l^{enc}$, $\mathbf{c}_l^{enc}$, and $\mathbf{s}_l^{enc}$ (line 9).

Next, the offline \pmdi determines $\mathbf{c}_{l+1}^{enc}$, which is a secret share needed in the online part. Depending on whether a layer of the ML model has only linear or non-linear components, the calculation of $\mathbf{c}_{l+1}^{enc}$ differs. If a layer only has \textbf{linear} components,  the cloud server determines $\mathbf{c}_{l+1}^{enc} \gets \mathbf{M}_l^{enc} \cdot \mathbf{c}_l^{enc} + \mathbf{s}_l^{enc}$ (line 11).
    
The following steps (lines 13-16) are performed if layer $l$ has \textbf{non-linear} components. First, the cloud server sends $\mathbf{M}_l^{enc} \cdot \mathbf{c}_l^{enc} + \mathbf{s}_l^{enc}$, computed in the linear part, to the client. Then the client sends a new random sample \texttt{HE.Enc(pk,$\mathbf{r}_{l+1}$)} to the cloud server, and the cloud server determines $\mathbf{c}_{l+1}^{enc}$ with \texttt{HE.Enc(pk,$\mathbf{r}_{l+1}$)}. Finally, the garbled circuit is used to compute the non-linear activation function privately. %in a privacy-preserving manner.

In particular, cluster $j$'s garbler and evaluator servers and the client run Algorithm \ref{alg:non-linear_offline_algorithm}. Using this algorithm, the garbler server in the cluster runs the \texttt{GS.Garble} algorithm on the input circuit $C$ and sends the output to the evaluator server (lines 1-2 of Algorithm \ref{alg:non-linear_offline_algorithm}). Next, the client, garbler, and evaluator server run the three-party OT algorithm described in Section \ref{sec:system} (line 3 of Algorithm \ref{alg:non-linear_offline_algorithm}). At the end, the evaluator receives the labels corresponding to $\mathbf{M}_l \cdot \mathbf{c}_l + \mathbf{s}_l$ and $\mathbf{r}_{l+1}$, two out of three inputs of the garbled circuit, where $\mathbf{M}_l \cdot \mathbf{c}_l + \mathbf{s}_l$ is the decrypted value of $\mathbf{M}_l^{enc} \cdot \mathbf{c}_l^{enc} + \mathbf{s}_l^{enc}$. Note that the circuit $C$ is a boolean circuit.  Assuming that the non-linearity is due to the ReLU function as an example (noting that our algorithm works with any non-linear functions),  the circuit $C$ computes the results in the following order. % as follows,
\begin{enumerate}
    \item  $\mathbf{M}_l \cdot \mathbf{x}_l = (\mathbf{M}_l \cdot \mathbf{c}_l + \mathbf{s}_l) + (\mathbf{M}_l(\mathbf{x}_l - \mathbf{c}_l) - \mathbf{s}_l)$, where the second term (i.e., $(\mathbf{M}_l(\mathbf{x}_l - \mathbf{c}_l) - \mathbf{s}_l)$) is calculated in the online phase.
    \item  $\mathbf{x}_{l+1} = \text{ReLU}(\mathbf{M}_l \cdot \mathbf{x}_l)$.
    \item  $\mathbf{x}_{l+1} - \mathbf{r}_{l+1}$.
\end{enumerate}

In the last step of Algorithm \ref{alg:overall_offline_algorithm}, the cloud server sends the ciphertext $\mathbf{c}_{L+1}^{enc}$ to the client (line 21) if the last layer is linear, as it's needed for the client in the online phase to decrypt the inference result. If the last layer is non-linear, the client already possesses the decrypted $\mathbf{c}_{L+1}^{enc}$ according to line 14.

%\vspace{-3pt}
\subsection{Online \pmdi} \label{subsec:overall_alg_online}
The online phase of \pmdi is responsible for computing the ML inference function $\mathbf{M}(\mathbf{x})$ given the client's data $\mathbf{x}$. We note that there is no communication between edge servers and the cloud server or the client, except for the beginning of the algorithm, when the client sends the input data to the first cluster to start the process, and at the end when the last cluster sends the result of the inference to the client. The online phase is summarized in Algorithm \ref{alg:overall_online_algorithm}.

First, the client sends a secret share of its input $\mathbf{x}$ masked by its random sample $\mathbf{r}_1$ to the garbler server of the first cluster, denoted as $\mathbf{x}_1 - \mathbf{c}_1$, where $\mathbf{x}_1 \gets \mathbf{x}$ and $\mathbf{c}_1 \gets \mathbf{r}_1$ (line 1). 
%\textcolor{violet}{Then, each layer $l$ of cluster $j$ is calculated with input $\mathbf{x}_l - \mathbf{c}_l$,  where  $\mathbf{c}_l$ is the decrypted value of $\mathbf{c}_l^{enc}$, determined in the offline phase.} 
Then, each layer $l$ assigned to cluster $j$ is processed with input $\mathbf{x}_l - \mathbf{c}_l$, where $\mathbf{c}_l$ is the decrypted value of $\mathbf{c}_l^{enc}$, determined in the offline phase. 
We note that the input of every layer is denoted by $\mathbf{x}_l - \mathbf{c}_l$, where $\mathbf{x}_l$ is the output of the previous $l-1$ layers. % and \textcolor{blue}{$\mathbf{c}_l$ is the decrypted value of $\mathbf{c}_l^{enc}$, determined in the offline phase}. 

The cluster $j$ performs linear computations of layer $l$ (if there is any) %\textcolor{blue}{(if applicable)}% 
on input $\mathbf{x}_l - \mathbf{c}_l$ to produce the output $\mathbf{M}_l(\mathbf{x}_l - \mathbf{c}_l) - \mathbf{s}_l$. 

First, the garbler server sends the input of the current layer $\mathbf{x}_l - \mathbf{c}_l$ to the other $T$ computing edge servers. Next, each computing edge server $v$, including the garbler server, uses $[\mathbf{M}_l]_v$ and $[\mathbf{s}_l]_v$ and computes $[\mathbf{M}_l]_v(\mathbf{x}_l - \mathbf{c}_l) - [\mathbf{s}_l]_v$. Note that each computing edge server makes its output private from the garbler server by adding $[\mathbf{s}_l]_v$ to its computation. Then, the garbler server runs the \texttt{ADD.Re} algorithm on the $T$ secret shares received from the computing edge servers and its own share $[\mathbf{M}_l]_{T+1}(\mathbf{x}_l - \mathbf{c}_l) - [\mathbf{s}_l]_{T+1}$, and obtains $\mathbf{M}_l(\mathbf{x}_l - \mathbf{c}_l) - \mathbf{s}_l$, which is the input of the next layer, denoted as $\mathbf{x}_{l+1} - \mathbf{c}_{l+1}$. 

Note that for a purely linear layer, if we expand $\mathbf{M}_l(\mathbf{x}_l - \mathbf{c}_l) - \mathbf{s}_l$, we get: $\mathbf{M}_l \cdot \mathbf{x}_l - \mathbf{M}_l \cdot \mathbf{c}_l - \mathbf{s}_l = \mathbf{M}_l \cdot \mathbf{x}_l - (\mathbf{M}_l \cdot \mathbf{c}_l + \mathbf{s}_l)$. \\
According to the definitions of $\mathbf{x}_{l+1}$ and $\mathbf{c}_{l+1}$, we have: $\mathbf{M}_l(\mathbf{x}_l - \mathbf{c}_l) - \mathbf{s}_l = \mathbf{M}_l \cdot \mathbf{x}_l - (\mathbf{M}_l \cdot \mathbf{c}_l + \mathbf{s}_l) = \mathbf{x}_{l+1} - \mathbf{c}_{l+1}$, which represents the input of the next layer, layer $l+1$.

\begin{algorithm}[ht]
 \begin{algorithmic}[1]
    \Input The client's input data $\mathbf{x}$ and the cloud server's model parameters $\mathbf{M}$.  
    \State The client sends a secret share of its input $\mathbf{x}$ masked by its random share $\mathbf{r}_1$ to the garbler server of the first cluster, denoted as $\mathbf{x}_1 - \mathbf{c}_1$, where $\mathbf{x}_1 \gets \mathbf{x}$ and $\mathbf{c}_1 \gets \mathbf{r}_1$.
    \For {$j \in [1, \ldots , P]$}
        \For {$l \in [l_{start}^j, \ldots ,l_{end}^j]$}
            \State The cluster $j$ performs linear operations on $\mathbf{x}_l - \mathbf{c}_l$ and determines $\mathbf{M}_l(\mathbf{x}_l - \mathbf{c}_l) - \mathbf{s}_l$
            %Algorithm \ref{alg:Cluster_online_linear_algorithm} on $\mathbf{x}_l - \mathbf{c}_l$, and outputs $\mathbf{M}_l(\mathbf{x}_l - \mathbf{c}_l) - \mathbf{s}_l$.
            \If {$l$ has non-linear components}
                \State The garbler and evaluator server run Algorithm \ref{alg:Cluster_online_non-linear_algorithm}, and outputs $\mathbf{x}_{l+1} - \mathbf{c}_{l+1}$, where $\mathbf{c}_{l+1} = \mathbf{r}_{l+1}$.
            \EndIf
        \EndFor
        \State The garbler server sends $\mathbf{x}_{l+1} - \mathbf{c}_{l+1}$ to the garbler server of the next cluster.
    \EndFor
    \State The garbler server of the last cluster sends $\mathbf{x}_{L+1} - \mathbf{c}_{L+1}$ to the client.
    \State The client unmasks the above secret share and obtains the inference result $\mathbf{M}(\mathbf{x}) = \mathbf{x}_{L+1}$. 
 \end{algorithmic}
  \textbf{Output:} Result of the inference $\mathbf{M}(\mathbf{x})$ on the client side. 
 \caption{Online \pmdi operation.} \label{alg:overall_online_algorithm}
\end{algorithm}

If $l$ has non-linear components, the garbler and evaluator server run Algorithm \ref{alg:Cluster_online_non-linear_algorithm}. This algorithm requires three sets of inputs: $\mathbf{M}_l(\mathbf{x}_l - \mathbf{c}_l) - \mathbf{s}_l$, calculated by the garbler server, $\mathbf{r}_{l+1}$, a random sample generated by the client to ensure the layer output's privacy, and  $\mathbf{M}_l \cdot \mathbf{c}_l + \mathbf{s}_l$, calculated by the cloud server and sent to the client in the offline phase.

The evaluator server obtains two sets of labels in the offline phase via the three-party OT protocol in Fig. \ref{fig:proxy-ot}. The garbler server sends the last set of labels corresponding to its output in the linear part, denoted as $\mathbf{M}_l(\mathbf{x}_l - \mathbf{c}_l) - \mathbf{s}_l$ (line 1 of Algorithm \ref{alg:Cluster_online_non-linear_algorithm}). We don't need OT to exchange the labels of this input. This justifies the garbler server's choice as the cluster's head server, responsible for communicating with the other edge servers of the cluster. Otherwise, it would increase the amount of communication. Then, with all the required labels and circuits, the evaluator server runs the \texttt{GS.Eval} algorithm and calculates $C(\mathbf{M}_l \cdot \mathbf{x}_l) - \mathbf{r}_{l+1}$, denoted by $\mathbf{x}_{l+1} - \mathbf{r}_{l+1}$ (line 2 of Algorithm \ref{alg:Cluster_online_non-linear_algorithm}). Finally, the evaluator server sends the result ($\mathbf{x}_{l+1} - \mathbf{r}_{l+1}$) back to the garbler server (line 3 of Algorithm \ref{alg:Cluster_online_non-linear_algorithm}).

\begin{algorithm}[ht]
 \begin{algorithmic}[1]
    \Input The garbled labels of $\mathbf{M}_l \cdot \mathbf{c}_l + \mathbf{s}_l$ and $\mathbf{r}_{l+1}$ stored by the evaluator server.
    \State The garbler server sends the corresponding labels of the input $\mathbf{M}_l(\mathbf{x}_l - \mathbf{c}_l) - \mathbf{s}_l$ to the evaluator server.
    \State The evaluator server runs \texttt{GS.Eval} and calculates $C(\mathbf{M}_l \cdot \mathbf{x}_l) - \mathbf{r}_{l+1}$, denoted by $\mathbf{x}_{l+1} - \mathbf{r}_{l+1}$.
    \State The evaluator server sends the result back to the garbler server.
    \Output $\mathbf{x}_{l+1} - \mathbf{r}_{l+1}$. 
 \end{algorithmic}
 \caption{Handling non-linearity in the \emph{online} phase.} \label{alg:Cluster_online_non-linear_algorithm}
\end{algorithm}

Cluster $j$ performs all the linear and non-linear parts of all the layers assigned to it in Algorithm \ref{alg:overall_online_algorithm}. Ultimately, the garbler server forwards the cluster's result either to the next cluster if $l_{end}^j$ is the final layer allocated to cluster $j$ (line 9) or the client (line 11) if $l_{end}^j$ is the ML model's final layer $L$, i.e., $l_{end}^j = L$. Finally, the client unmasks the secret share $\mathbf{x}_{L+1} - \mathbf{c}_{L+1}$ and obtains the inference result $\mathbf{M}(\mathbf{x}) = \mathbf{x}_{L+1}$.

%\vspace{-3pt}
\section{\label{sec:analysis} Privacy and Communication Analysis} 
%\vspace{-2pt}
In this section, we analyze \pmdi in terms of its privacy guarantee and communication overhead. 

%\vspace{-5pt}
\subsection{Privacy} 

In this section, we establish the privacy properties of our protocol and demonstrate its privacy guarantees. We prove that our protocol is secure based on Definition \ref{def:privacy_protocol}. We use simulation-based, hybrid arguments to prove security \cite{goldreich2009foundations, fischlin2021overview}. 

% ---------------- Definition ------------------

\begin{definition}\label{def:privacy_protocol}
A cryptographic inference protocol $\mathbf{\Pi}$ involves a cloud server with model parameters $\mathbf{M} = (\mathbf{M}_1, \mathbf{M}_2, \ldots, \mathbf{M}_l)$, a client with an input vector $\mathbf{x}$, and $P$ clusters, each containing $T$ computing edge servers, a garbler server, and an evaluator server. The protocol is considered secure if it satisfies the following conditions:

\noindent \underline{Correctness:} The protocol ensures that for any given set of model parameters $\mathbf{M}$ held by the cloud server and any input vector $\mathbf{x}$ provided by the client, the client receives the correct prediction $\mathbf{M}(\mathbf{x})$ after executing the protocol.

\noindent \underline{Security:} Consider the following scenarios: 

\noindent \textbf{Compromised cloud server:} A semi-honest cloud server, even if partially compromised, should not learn the client's private input $\mathbf{x}$. This is formally captured by the existence of an efficient simulator $\mathcal{S}_{\textnormal{CS}}$ such that $\text{View}^{\mathbf{\Pi}}_{\textnormal{CS}} \approx_c \mathcal{S}_{\textnormal{CS}}(\mathbf{M})$, where $\text{View}^{\mathbf{\Pi}}_{\textnormal{CS}}$ denotes the cloud server's view during the protocol execution.   

\noindent  \textbf{Compromised client:} A semi-honest client, even if compromised, should not learn the cloud server's model parameters $\mathbf{M}$. This is formally captured by the existence of an efficient simulator $\mathcal{S}_{\textnormal{C}}$ such that $\text{View}^{\mathbf{\Pi}}_{\textnormal{C}} \approx_c \mathcal{S}_{\textnormal{C}}(\mathbf{x}, \mathbf{y})$, where $\text{View}^{\mathbf{\Pi}}_{\textnormal{C}}$ denotes the client's view during the protocol execution, including the input, randomness, and protocol transcript, and $\mathbf{y}$ is the output of the inference.
    
\noindent  \textbf{Compromised edge server:} Semi-honest edge servers within a cluster should not gain information about the client's input $\mathbf{x}$ or the cloud server's model parameters $\mathbf{M}$. This is formally captured by the existence of an efficient simulator $\mathcal{S}_{\textnormal{ES}}$ for each edge server $i$, such that $\text{View}^{\mathbf{\Pi}}_{\textnormal{ES},i} \approx_c \mathcal{S}_{\textnormal{ES},i}$, where $\text{View}^{\mathbf{\Pi}}_{\textnormal{ES},i}$ denotes the view of the edge server during the protocol execution. The edge server can be a computing edge server, a garbler server, or an evaluator server, $i \gets C, G, E$.

\end{definition}

\begin{theorem}\label{th:privacy}
\pmdi is secure according to Definition \ref{def:privacy_protocol} assuming the use of secure garbled circuits, linearly homomorphic encryption, and three-party OT.
\end{theorem}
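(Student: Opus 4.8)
The plan is to prove security via the standard simulation-based hybrid argument, handling each of the three corruption cases in Definition~\ref{def:privacy_protocol} separately and invoking the security guarantees of the underlying primitives (semantic security and function privacy of \texttt{HE}, security of the garbling scheme \texttt{GS}, the security of additive secret sharing, and the information-theoretic privacy of our three-party OT, proved in Appendix~A) as black boxes at the appropriate hybrid steps. Correctness is argued first and is essentially a bookkeeping exercise: I would track the invariant $\mathbf{x}_{l+1} - \mathbf{c}_{l+1} = \mathbf{M}_l \cdot \mathbf{x}_l - (\mathbf{M}_l \cdot \mathbf{c}_l + \mathbf{s}_l)$ through both the linear path (where \texttt{ADD.Re} reconstructs $\mathbf{M}_l(\mathbf{x}_l - \mathbf{c}_l) - \mathbf{s}_l$ from the $T+1$ shares) and the non-linear path (where the garbled circuit $C$ recombines the two additive shares to form $\mathbf{M}_l \cdot \mathbf{x}_l$, applies the activation, and re-masks by $\mathbf{r}_{l+1}$), concluding that the client's final unmasking yields $\mathbf{x}_{L+1} = \mathbf{M}(\mathbf{x})$.

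\textbf{Compromised cloud server.} I would construct $\mathcal{S}_{\textnormal{CS}}$ taking only $\mathbf{M}$ as input. The cloud server's view consists of $\texttt{HE.Enc}(\texttt{pk},\mathbf{r}_l)$ messages received from the client (for $l=1$ and for each non-linear layer), plus the \texttt{pk} itself and its own randomness. The simulator samples a fresh key pair and outputs encryptions of zero (or arbitrary messages) in place of the $\mathbf{r}_l$ ciphertexts; indistinguishability follows by a hybrid over the encrypted messages using semantic security of \texttt{HE}. Crucially, the cloud server never receives $\mathbf{x}$ in any form during either phase, so no further argument is needed.

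\textbf{Compromised client.} This is the most delicate case and I expect it to be the main obstacle. The client's view includes its own input $\mathbf{x}$, its randomness $(\texttt{sk},\texttt{pk},\{\mathbf{r}_l\},\{b\})$, the ciphertexts $\mathbf{M}_l^{enc}\cdot\mathbf{c}_l^{enc}+\mathbf{s}_l^{enc}$ it receives for non-linear layers (and $\mathbf{c}_{L+1}^{enc}$ at the end), the garbler's messages in the three-party OT (the masked labels $\mathbf{k}_0\oplus\mathbf{u}_{i\oplus b}$, $\mathbf{k}_1\oplus\mathbf{u}_{\overline{i\oplus b}}$), and the final output share $\mathbf{x}_{L+1}-\mathbf{c}_{L+1}$. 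The simulator $\mathcal{S}_{\textnormal{C}}(\mathbf{x},\mathbf{y})$ must reproduce all of this knowing only $\mathbf{x}$ and $\mathbf{y}=\mathbf{M}(\mathbf{x})$. The key observations are: (i) each value $\mathbf{M}_l^{enc}\cdot\mathbf{c}_l^{enc}+\mathbf{s}_l^{enc}$, once decrypted, equals $\mathbf{M}_l\cdot\mathbf{c}_l+\mathbf{s}_l$, which is uniformly random and independent of $\mathbf{M}_l$ because $\mathbf{s}_l$ is a fresh uniform mask known only to the cloud server — hence the simulator encrypts a uniform random vector; one must also invoke function privacy of \texttt{HE} so that the \emph{form} of the ciphertext (a product-plus-sum) does not leak that a model weight was involved, replacing it by $\mathcal{S}_{\textnormal{FP}}$'s output; (ii) in the three-party OT, the garbler's outgoing messages to the client are masked by $\mathbf{u}_0,\mathbf{u}_1$ which the client never sees, so they are uniform from the client's view and the simulator samples them uniformly; (iii) the final share $\mathbf{x}_{L+1}-\mathbf{c}_{L+1} = \mathbf{y} - \mathbf{r}_{L+1}$ is computed by the simulator directly from $\mathbf{y}$ and the known $\mathbf{r}_{L+1}$. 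Assembling these into a hybrid sequence — real view, then swap each decrypted-masked-value for uniform via semantic security, then swap ciphertext forms via function privacy, then swap OT messages for uniform — yields $\text{View}^{\mathbf{\Pi}}_{\textnormal{C}} \approx_c \mathcal{S}_{\textnormal{C}}(\mathbf{x},\mathbf{y})$. The subtlety is making sure the joint distribution of all $\{\mathbf{r}_l\}$, the masked values, and the output share is consistent in the simulation, since $\mathbf{r}_{L+1}$ appears both as client randomness and inside the final share.

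\textbf{Compromised edge servers.} Here I split into three sub-cases per the $i \gets C,G,E$ distinction, and use the assumption that at most $T$ of the $T+2$ servers in a cluster collude and that garbler and evaluator never collude together. For a coalition of up to $T$ computing servers (possibly including the garbler, but not the evaluator), the view consists of the additive shares $\{[\mathbf{M}_l]_v,[\mathbf{s}_l]_v\}$ and, in the online phase, the masked layer input $\mathbf{x}_l-\mathbf{c}_l$ together with the partial products $[\mathbf{M}_l]_v(\mathbf{x}_l-\mathbf{c}_l)-[\mathbf{s}_l]_v$; since at least one share of $\mathbf{M}_l$ (held by the evaluator's slot, or one outside the coalition) is missing, the security of additive secret sharing makes $\{[\mathbf{M}_l]_v\}$ jointly uniform, and $\mathbf{x}_l-\mathbf{c}_l = \mathbf{x}_l - \mathbf{r}_l$ is uniform because $\mathbf{r}_l$ is an unseen client mask — so $\mathcal{S}_{\textnormal{ES}}$ just outputs uniform vectors, with no input at all. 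For the garbler server specifically, I additionally note its GC-related view (the circuit $\Tilde C$ it generated, the input labels, and the returned output-label string $\mathbf{x}_{l+1}-\mathbf{r}_{l+1}$) is simulable: it generated $\Tilde C$ itself so that is free, and the returned value is uniform by the mask $\mathbf{r}_{l+1}$. For the evaluator server, the view is $\Tilde C$, the labels it holds for the three inputs, and its OT messages; here I invoke the garbling scheme's simulator $\mathcal{S}_{\textnormal{GS}}(1^\lambda,1^{|C|},C(\cdot))$ — with the circuit output being $\mathbf{x}_{l+1}-\mathbf{r}_{l+1}$, itself uniform — to argue the evaluator learns nothing beyond that masked output, and I use information-theoretic privacy of the three-party OT (Appendix~A) to argue the OT transcript reveals nothing about the client's selection bits. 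Composing these cluster-local simulators across all $P$ clusters and $L$ layers by a standard hybrid (replacing one cluster's execution at a time), and combining with the two cases above, establishes that $\mathbf{\Pi} = \pmdi$ satisfies Definition~\ref{def:privacy_protocol}, completing the proof. \qed
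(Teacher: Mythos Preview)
Your proposal is correct and follows essentially the same approach as the paper's own proof: a simulation-based hybrid argument treating the cloud-server, client, and edge-server corruption cases separately, invoking semantic security and function privacy of \texttt{HE} for the ciphertexts the client sees, the uniformity of the additive masks $\mathbf{s}_l$ and $\mathbf{r}_l$ for the shared values, the garbling simulator $\mathcal{S}_{\textnormal{GS}}$ for the evaluator's view, and the information-theoretic privacy of the three-party OT for the masked-label messages. The only cosmetic differences are that the paper treats each edge-server role (computing, garbler, evaluator) as a separate standalone simulator rather than bundling the computing servers with the garbler into a coalition argument, and it is slightly more explicit about a couple of messages in the garbler's view (the client's masked OT bit $i\oplus b$ and the $T$ partial products returned by the other computing servers), but you have these covered implicitly.
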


\begin{proof}
We use simulation-based, hybrid arguments to prove Theorem \ref{th:privacy}
%following similar approaches in \cite{goldreich2009foundations, fischlin2021overview}. 
as detailed in Appendix B. 
\end{proof}

\begin{table*}[t]
    \caption{Communication overhead analysis. CS: Cloud Server (Model Owner), C: Client (Data Owner), and ES: Edge Server}
    \label{tab:comm_complexity}
    %\vspace{-3pt}
\centering
\resizebox{\textwidth}{!}{
\begin{tabular}{>{\centering\arraybackslash}m{1.3cm} > {\centering\arraybackslash}m{2.6cm} >{\centering\arraybackslash}m{0.5cm} >{\centering\arraybackslash}m{2cm} >{\centering\arraybackslash}m{0.5cm} >{\centering\arraybackslash}m{2cm} >
{\centering\arraybackslash}m{0.5cm} >{\centering\arraybackslash}m{2cm} >{\centering\arraybackslash}m{0.5cm} >{\centering\arraybackslash}m{2cm} >{\centering\arraybackslash}m{0.5cm} >{\centering\arraybackslash}m{2cm} >{\centering\arraybackslash}m{0.5cm} >{\centering\arraybackslash}m{2cm}}
\toprule
& \multirow{ 2}{*}{Layer Component} &\multicolumn{2}{c}{Delphi Offline}&\multicolumn{2}{c}{Delphi Online}&\multicolumn{2}{c}{SecureNN}&\multicolumn{2}{c}{Falcon}&\multicolumn{2}{c}{\pmdi Offline}&\multicolumn{2}{c}{\pmdi Online}\\ 
\cmidrule{3-14}
 & & Rds. & Comm. & Rds. & Comm. & Rds. & Comm. & Rds. & Comm. & Rds. & Comm. & Rds. & Comm. \\ 
\midrule
\multirow{2}{*}{CS-C} & linear & $2$ & $2N_{enc}$ & $2$ & $2N$ & - & - & - & - & $2$ & $2N_{enc}$ & - & - \\
& non-linear & $3$ & $|GC| + 2N\kappa$ & 1 & $N \kappa$ & - & - & - & - & - & - & - & - \\
\cmidrule{1-14}
\multirow{2}{*}{C-ES} & linear & - & - & - & - & - & - & - & - & - & - & - & - \\ 
& non-linear & - & - & - & - & - & - & - & - & $6$ & $2N(2\kappa + 1)$ & - & - \\ 
\cmidrule{1-14}
\multirow{2}{*}{ES-ES} & linear & - & - & - & - & $2$ & $(h^2g^2i + 2g^2oi + h^2o)N$ & $1$ & $(h^2o)N$ & - & - & $2$ & $(h^2i + h^2o)N$ \\
& non-linear & - & - & - & - & $10$ & $(8\log p + 24)N$ & $5 + \log N$ & $0.5N$ & $1$ & $|GC|$ & $1$ & $N \kappa$ \\
\bottomrule
\end{tabular}}
%\vspace{-15pt}
\end{table*}

%\vspace{-5pt}
\subsection{\label{sec:commOver}Communication Overhead} We analyze the communication overhead of \pmdi as compared to Delphi \cite{delphi}, SecureNN \cite{securenn}, and Falcon \cite{wagh2020falcon} as summarized in  Table \ref{tab:comm_complexity}. In this context, $T$ represents the number of colluding edge servers in a cluster, $N$ is the number of bits in a single input data $x \in \mathbb{Z}_q$, $\kappa$ denotes the length of the garbled circuit labels, $N_{enc}$ is the number of bits in homomorphically encrypted data, and $|GC|$ indicates the number of bits required to transmit the garbled circuits. Additionally, $h$ denotes the dimension of the square input matrix to a layer, $g$ is the dimension of the kernel corresponding to the layer, and $i$ and $o$ represent the number of input and output channels, respectively. $p$ is the smaller field size in SecureNN \cite{securenn}.
 
Each row in Table \ref{tab:comm_complexity} shows the number of communication rounds and the amount of data (bits) exchanged between the following pairs: the cloud server and the client, the client and the edge server, and two edge servers. We exclude the communication overhead between the cloud server and edge servers, as they are usually connected via high-speed links.
%Notably, there is no communication between the cloud server and edge servers. 
The critical insight from this table is that \pmdi offloads as much communication as possible to the edge servers in the offline phase. In contrast to Delphi, which follows a client-server model, \pmdi shifts the bottleneck (sending the GCs) to the edge servers and eliminates online phase communication between the client and the cloud server. SecureNN does not separate the protocol into online and offline phases and has high communication overhead. Only Falcon has better communication overhead as compared to \pmdi, but it is limited in the sense that it (i) only supports a few non-linear functions, (ii) has higher linear computation overhead, 
% HS: Not clear, so commented out. \textcolor{blue}{(is multiplied by the number of computing servers at each server side)}, 
%\textcolor{blue}{I meant if they have n servers, the computation overhead at each server is multiplied by n.}
(iii) does not support model-distributed inference, and (iv) cannot tolerate collusion among multiple parties due its use of replicated secret sharing. Section \ref{sec:simulation results} confirms our analysis and shows that \pmdi reduces ML inference time as compared to Delphi, SecureNN, and Falcon.
More details are provided in Appendix C.

%However, \pmdi emphasizes reducing online latency. By incorporating an offline phase, Private MDI halves the number of multiplications and minimizes the communication complexity for convolutional layers by a factor proportional to the kernel size. These optimizations lead to lower latency compared to Delphi and other approaches. In section \ref{sec:simulation results}, we validate our claims with simulation results.

%Compared to the parallel work SECO \cite{chen2024seco}, their approach for the offline phase introduces additional homomorphic encryption computations for the second split of the ML model and increases the homomorphically encrypted communication overhead by a factor of three. Furthermore, during the online phase, the first part of the NN requires an interactive protocol with the client, similar to Delphi. Although SECO's code is not publicly available for direct comparison, our theoretical analysis indicates that Private MDI outperforms SECO in terms of offline phase latency and the removal of the client from the online phase.

%\vspace{-3pt}
\section{Evaluation} \label{sec:simulation results}
%\vspace{-2pt}

In this section, we evaluate the performance of our protocol \pmdi in a real testbed. We will first describe our experimental setup and then provide our results. 

\textbf{Experimental Setup.} To run our experiments, we used a desktop computer with an Intel Core i7-8700 CPU at 3.20GHz with 16GB of RAM as the cloud server and different instances of Jetstream2 \cite{jetstream2} from ACCESS \cite{NSFACCESS} as the edge servers and the client. Specifically, for the first four clusters (of edge servers) of \pmdi, we used \texttt{m3.quad} as the computing edge server, \texttt{m3.xl} as the garbler server, and \texttt{m3.2xl} as the evaluator server. For clusters 4 to 6, we used \texttt{m3.quad} as the computing edge server, \texttt{m3.large} as the garbler server, and \texttt{m3.xl} as the evaluator server. For clusters 7 and 8, we used \texttt{m3.quad} as the computing edge server, \texttt{m3.medium} as the garbler server, and \texttt{m3.large} as the evaluator server. We used an instance of \texttt{m3.medium} as the client. All the devices are connected via TCP connections. We compare \pmdi with baselines Delphi \cite{delphi}, SecureNN \cite{securenn}, and Falcon \cite{wagh2020falcon}. 

To ensure a fair comparison with Delphi, we utilized the same instance types for the client (\texttt{m3.medium} on Jetstream2) and the cloud server (a desktop computer equipped with an Intel Core i7-8700 CPU at 3.20 GHz with 16 GB of RAM) as described in the \pmdi setup.
%\footnote{We note that we utilized Delphi's code, but we employed a lower range of randomness to prevent variable overflow. In our approach, each edge server multiplies two secret shares over a finite field $\mathbb{Z}_q$, where $q$ is a 32-bit prime number. This multiplication can overflow in a 64-bit arithmetic operation, so we needed to use 128-bit arithmetic. However, as this was outside the scope of our goals and research area, we decided to lower (half) the range of randomness instead of using the full range of the field.}. 
Additionally, we conducted experiments with various Jetstream2 instance types to verify that Delphi's computational setup does not negatively impact their latency. For our comparison with Falcon, we employed three \texttt{m3.2xl} instances on Jetstream2.

We used the following ML architectures and datasets to run the experiments: 
(i) The ML model in Figure 13 of MiniONN \cite{oblivious} on the CIFAR-10 dataset; (ii) ResNet32 model introduced in \cite{he2016deep} with CIFAR-100 dataset; (iii) The ML model in Figure 12 of MiniONN \cite{oblivious} on MNIST dataset, with Maxpool being replaced by Meanpool; and (iv) VGG16 model introduced in \cite{simonyan2014very} on Tiny ImageNet dataset.

\begin{figure}[h]
\centering
\subfigure[Offline. CIFAR-10 on MiniONN]{ \scalebox{0.42}{\includegraphics[width=0.52\textwidth]{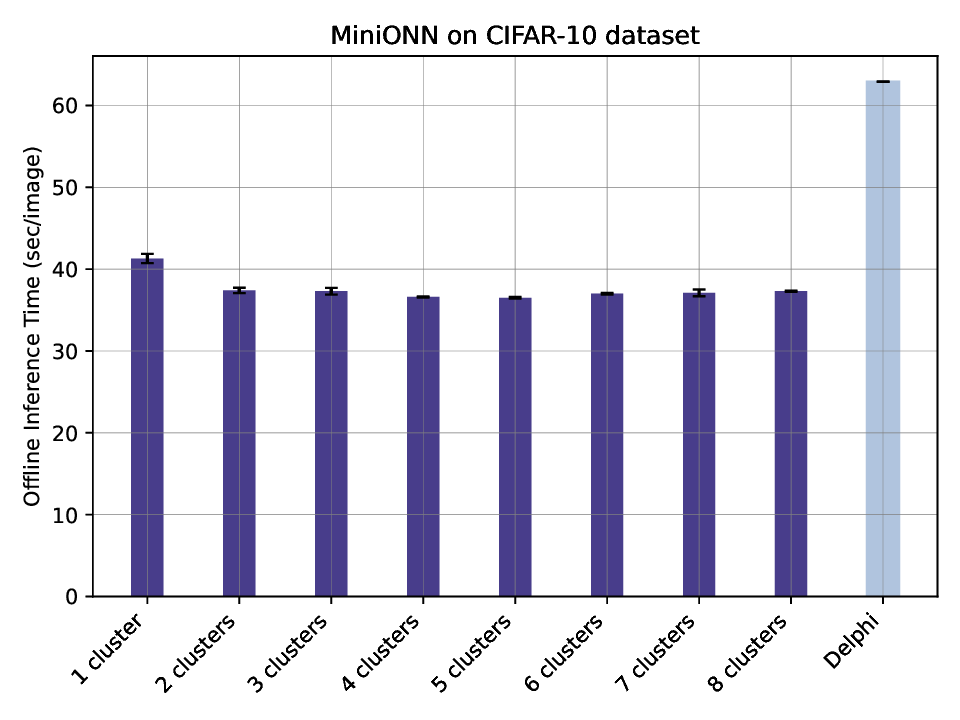}} } 
\subfigure[Online. CIFAR-10 on MiniONN]{ \scalebox{0.42}{\includegraphics[width=0.52\textwidth]{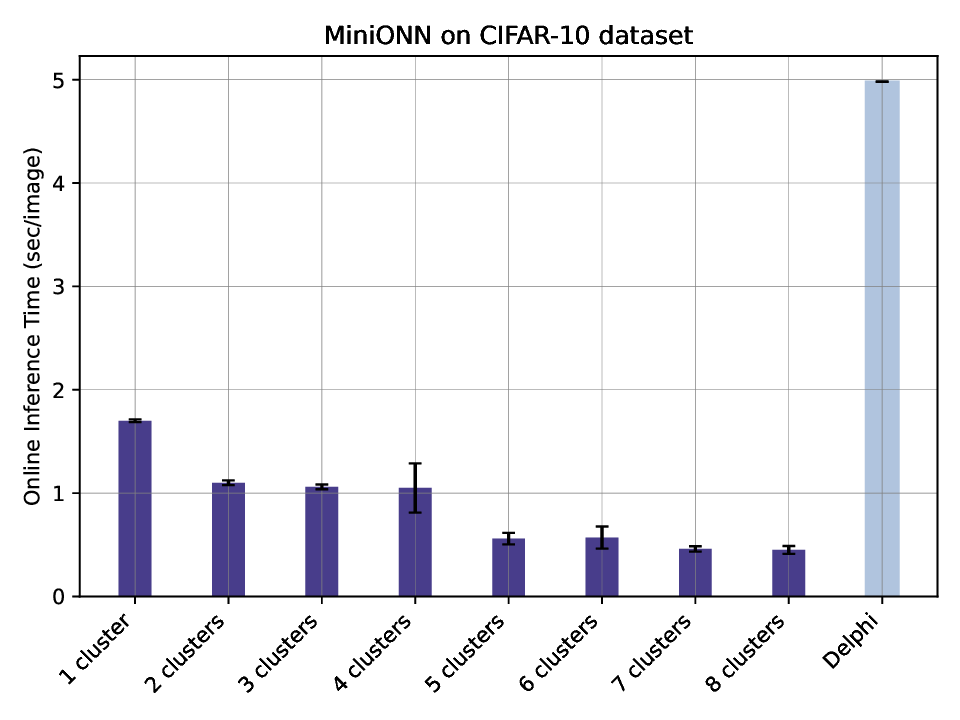}} } \\
\subfigure[Offline. CIFAR-100 on ResNet32]{ \scalebox{0.42}{\includegraphics[width=0.52\textwidth]{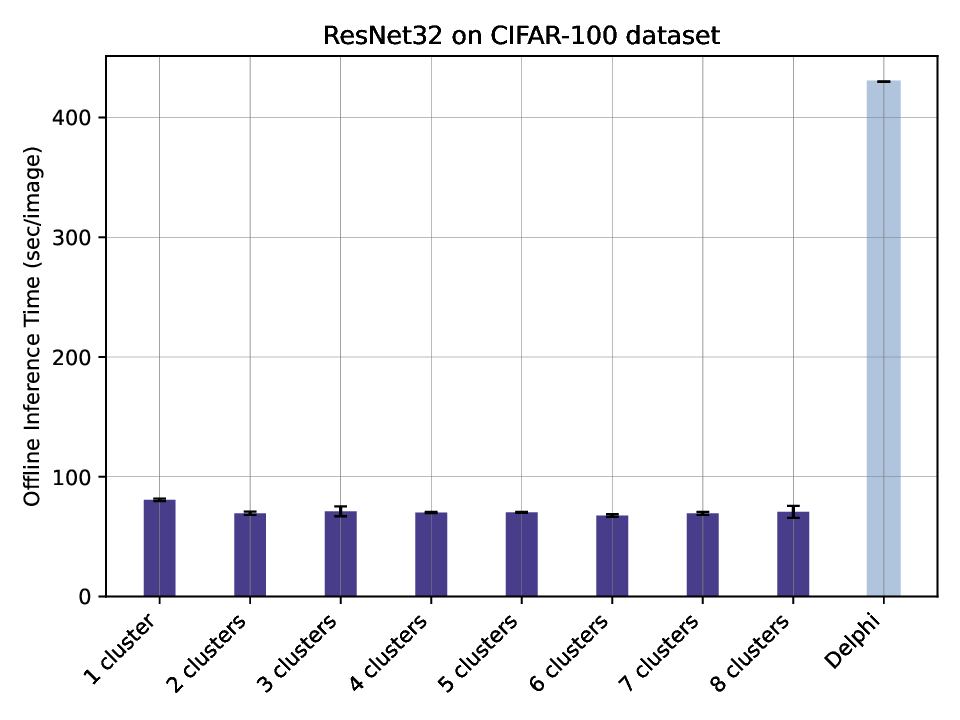}} }
\subfigure[Online. CIFAR-100 on ResNet32]{ \scalebox{0.42}{\includegraphics[width=0.52\textwidth]{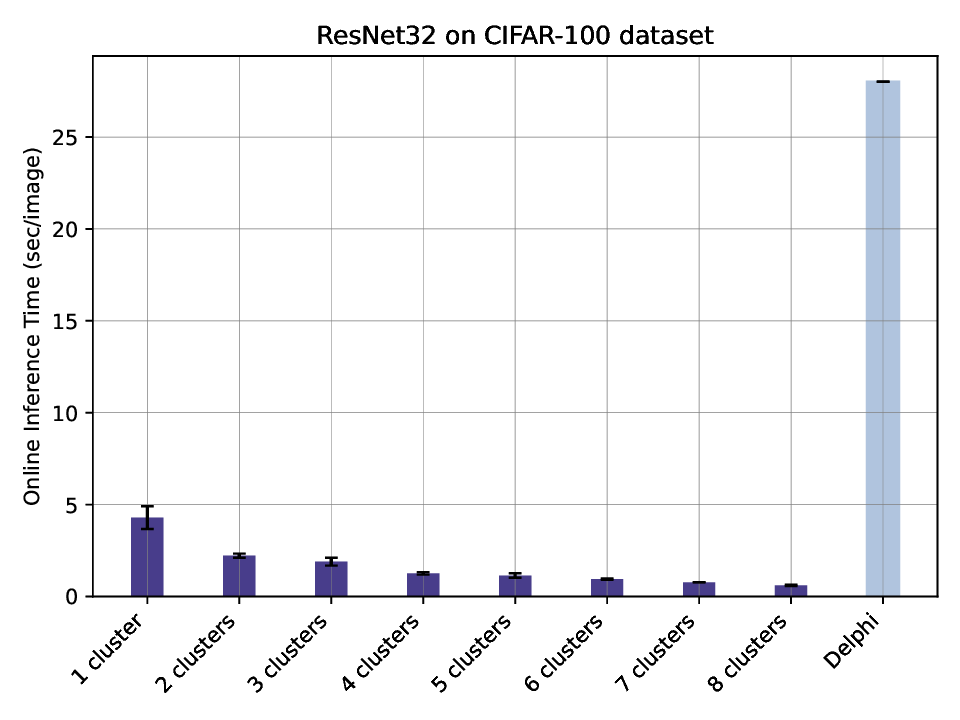}} }
%\vspace{-5pt}
\caption{ML inference time of \pmdi in online and offline phases with increasing number of clusters.}
%\vspace{-20pt}
\label{fig:pmdi-vs-delphi}
\end{figure}

\textbf{Results.} Fig. \ref{fig:pmdi-vs-delphi} presents the ML inference time of \pmdi as compared to Delphi in both offline and online phases. Fig. \ref{fig:pmdi-vs-delphi}(a) shows the delay of the offline protocols for both \pmdi and Delphi for the CIFAR-10 dataset on MiniONN architecture. As seen, \pmdi significantly improves the offline delay as compared to Delphi, thanks to reducing the communication overhead. As seen, the increasing number of clusters does not affect the offline delay in \pmdi, because model-distributed inference only affects the delay in the online phase. Indeed, Fig. \ref{fig:pmdi-vs-delphi}(b) shows that the ML inference time of \pmdi is significantly lower than Delphi in the same setup and decreases with the increasing number of clusters thanks to model distribution. 

Fig. \ref{fig:pmdi-vs-delphi}(c) and (d) show the ML inference time in the offline and online phases for the CIFAR-100 dataset and ResNet32 architecture. As seen, the improvement of \pmdi as compared to Delphi is more pronounced in this setup as the dataset and the ML model are larger in this setup, and \pmdi performs better in this setup, thanks to model parallelization and reducing communication overhead. 

%the ML inference delay in the online phase for the CIFAR-10 dataset on MiniONN architecture a

%and \ref{fig:minionn-online} illustrate the offline and online performance of \pmdi across varying cluster numbers, compared to Delphi, using the MiniONN model on the CIFAR-10 dataset. As anticipated, increasing the number of clusters reduces latency.

%Figures \ref{fig:resnet-offline} and \ref{fig:resnet-online} present the corresponding results for the ResNet32 model on the CIFAR-100 dataset.

\begin{table}[h]
%\vspace{-5pt}
  \centering
  \caption{ML inference time of \pmdi in the online phase as compared to Falcon and SecureNN.}
  %\vspace{-3pt}
  \label{tab:sim_res}
  \resizebox{0.35\textwidth}{!}{
    \begin{tabular}{cccc}
      \toprule
      & MiniONN & AlexNet & VGG16 \\
      \midrule
     \pmdi & 0.09 s & 2.73 s & 3.63 s \\
      Falcon & 0.02 s & 2.11 s & 5.12 s \\
      SecureNN & 0.44 s & - & - \\
      \bottomrule
    \end{tabular}}
    %\vspace{-5pt}
\end{table}

Table \ref{tab:sim_res} shows the ML inference time of \pmdi in the online phase as compared to SecureNN and Falcon for MiniONN, AlexNet, and VGG16 architectures. MNIST dataset is used with the MiniONN, and Tiny ImageNet dataset is used with the AlexNet and VGG16. \pmdi has 8 clusters for AlexNet and VGG16 and 7 clusters for MiniONN. %\textcolor{violet}{As seen, \pmdi improves over SecureNN in MiniONN, but not over Falcon, because Falcon improves the communication cost over \pmdi as discussed in Section \ref{sec:commOver}. \pmdi improves over Falcon in the VGG16 setup because \pmdi uses model-distributed inference, whose benefit becomes more emphasized for larger ML models and datasets.} 
As seen, \pmdi improves over SecureNN in MiniONN. However, Falcon's improved communication cost, as discussed in Section \ref{sec:commOver}, gives it an edge over smaller ML models like MiniONN. As the models grow larger, the performance gap narrows. While \pmdi shows improvements but does not outperform Falcon in AlexNet, it demonstrates significant improvement in the VGG16 setup due to the benefits of model-distributed inference, which become more pronounced with larger ML models and datasets. Further results of \pmdi on the VGG16 ML model for different numbers of clusters are provided in Appendix D.
\section{Conclusion} \label{sec:conc}
%\vspace{-3pt}

This paper designed privacy-preserving hierarchical model-distributed inference, \pmdi protocol to speed up ML inference in a hierarchical setup while providing privacy to both data and ML model. Our \pmdi design (i) uses model-distributed inference at the edge servers, (ii) reduces the amount of communication to/from the cloud server to reduce ML inference time, and (iii) uses additive secret sharing with HE, which reduces the number of computations.
The experimental results demonstrated that \pmdi significantly reduced the ML inference time as compared to the baselines. 

%uses additive secret sharing with homomorphic encryption, which reduces the number of computations significantly.

%\pmdi consists of offline and online phases. We designed these phases in a way that most of the data exchange is done in the offline phase while the communication overhead of the online phase is reduced. In particular, there is no communication to/from the cloud server in the online phase, and the amount of communication between the client and edge servers is minimized. 

\bibliographystyle{IEEEtran}

\bibliography{refs}

\section*{Appendix A: Correctness and Security Proof of PROXY-OT}

In this section, we prove the correctness and security of the PROXY-OT protocol described in Section \ref{sec:system} and summarized in Fig. \ref{fig:proxy-ot}. First, we prove that the protocol is correct, meaning that at the end of the protocol, the evaluator server gets $\mathbf{k}_i$, such that $i \in \{0, 1\}$ is the private input of the client and $\mathbf{k}_0$ and $\mathbf{k}_0$ are the labels generated by the garbler server, with $\mathbf{k}_j \in \{0, 1\}^\kappa, j = 0, 1$. As stated before, $b \in \{0, 1\}$ and $\mathbf{u}_j \in \{0, 1\}^\kappa, j = 0, 1$ are random samples generated using a pseudorandom generator with the same seed. Note that the arithmetic operations $\oplus$ are done over the field $\mathbb{Z}_2$.

\emph{Correctness:} The evaluator server decrypts the masked label $\mathbf{k}_i \oplus \mathbf{u}_{b}$ knowing $\mathbf{u}_0$, $\mathbf{u}_1$, and $b$:
    \begin{align}
        (\mathbf{k}_i \oplus \mathbf{u}_{b}) \oplus \mathbf{u}_b = \mathbf{k}_i
        \nonumber
    \end{align}

\emph{Security:} At each round, the exchanged messages are masked by a random sample. Knowing this, it can be concluded that each ciphertext reveals no information. In particular: 

\begin{itemize}
\item $i \oplus b$ is secure, according to Shannon's perfect secrecy theorem:
    \begin{align}
        \forall i \in \{0, 1\}, i' \in \{0, 1\}: \nonumber \\
        \Prob(i \oplus b = i') = \frac{1}{2}
        \nonumber
    \end{align}

\item $\mathbf{k}_0 \oplus \mathbf{u}_{i \oplus b}$ and $\mathbf{k}_1 \oplus \mathbf{u}_{\overline{i \oplus b}}$ are secure, according to Shannon's perfect secrecy theorem:
    \begin{align}
        \forall \mathbf{k}_j \in \{0, 1\}^\kappa, \mathbf{k}' \in \{0, 1\}^\kappa: \nonumber \\ 
        \Prob(\mathbf{k}_0 \oplus \mathbf{u}_{i \oplus b} = \mathbf{k}') = \Prob(\mathbf{k}_1 \oplus \mathbf{u}_{\overline{{i \oplus b}}} = \mathbf{k}') = \left(\frac{1}{2}\right)^\kappa
        \nonumber
    \end{align}

\item $\mathbf{k}_i \oplus \mathbf{u}_b$ can be proved secure similarly.
    \begin{align}
        \forall i \ \mathbf{k}_i \in \{0, 1\}^\kappa, \mathbf{k}' \in \{0, 1\}^\kappa: \nonumber \\ 
         \Prob(\mathbf{k}_i \oplus \mathbf{u}_b = \mathbf{k}') = \left(\frac{1}{2}\right)^\kappa
        \nonumber
    \end{align}

\end{itemize}

\section*{Appendix B: Security Proof of \pmdi} \label{pmdi_sec_proof}

To demonstrate security for \pmdi, for each adversarial scenario, a simulator is presented, followed by a proof of indistinguishability with the real-world distribution. We use hybrid arguments to prove the indistinguishability of the real-world distribution and the ideal-world (simulation) distribution for each case. In the simulator's description, we focus on the actions taken by the simulator for clarity and simplicity. %HS: We mentioned this in our threat model. No need to repeat it. \textcolor{blue}{(There is no change in the messages that the adversary sends or actions that takes.)}

%------------------ cloud server ------------------

\noindent \textbf{Compromised cloud server.}

\noindent \emph{Simulator:} The simulator $\mathcal{S}_{\textnormal{CS}}$, when provided with the cloud server’s input $\mathbf{M}_1, \ldots , \mathbf{M}_L$, proceeds as follows during the offline phase. 

\begin{enumerate}
    \item $\mathcal{S}_{\textnormal{CS}}$ chooses a uniform random tape for the cloud server.
    
    \item $\mathcal{S}_{\textnormal{CS}}$ chooses a public key \texttt{pk} for a linearly homomorphic encryption scheme.
                
    \item For the first layer, $\mathcal{S}_{\textnormal{CS}}$ sends \texttt{HE.Enc(pk,$\mathbf{r}'_1$)} instead of \texttt{HE.Enc(pk,$\mathbf{r}_1$)} to the cloud server, where $\mathbf{r}'_1$ is chosen randomly from $\mathbb{Z}^n_q$.
                
    \item For every layer $l$ that has a non-linear component, $\mathcal{S}_{\textnormal{CS}}$ sends a junk ciphertext \texttt{HE.Enc(pk,$\mathbf{r}'_l$)} instead of \texttt{HE.Enc(pk,$\mathbf{r}_l$)} to the cloud server, where $\mathbf{r}'_l$ is chosen randomly from $\mathbb{Z}^n_q$.
\end{enumerate}

\noindent \emph{Indistinguishability proof:}

    $\pmb{\mathcal{H}}_{0}$\textbf{:} This hybrid corresponds to the real world distribution where the actual input data $\mathbf{x}$ is used as the client's input.
    
    $\pmb{\mathcal{H}}_{1}$\textbf{:} We make a syntactic change in this hybrid. In the offline phase, for the layers with a non-linear component and the first layer, the client sends \texttt{HE.Enc(pk,$\mathbf{r}'_l$)} to the cloud server instead of encrypted $\mathbf{r}_l$. From the semantic security of the homomorphic encryption, it can be concluded that $\pmb{\mathcal{H}}_{0}$ and $\pmb{\mathcal{H}}_{1}$ are computationally indistinguishable. Also, $\pmb{\mathcal{H}}_{1}$ is identically distributed to the simulator's output, which completes the proof for this part.

%--------------------- client ----------------------

\break
\noindent \textbf{Compromised client.}

\noindent \emph{Simulator:} The simulator $\mathcal{S}_{\textnormal{C}}$, when provided with the client’s input $\mathbf{x}$, proceeds as follows.
    \begin{enumerate}
        \item $\mathcal{S}_{\textnormal{C}}$ chooses a uniform random tape for the client. 
        
        \item In the offline phase:
        \begin{enumerate}
            \item For every layer $l$ with a non-linear component:
            \begin{enumerate}
                \item $\mathcal{S}_{\textnormal{C}}$ sends \texttt{HE.Enc(pk,$\mathbf{s}'_l$)} as the homomorphically evaluated ciphertext 
                %instead of $\mathbf{M}_l^{enc} \cdot \mathbf{c}_l^{enc} + \mathbf{s}_l^{enc}$ 
                to the client, $\mathbf{s}'_l$ is chosen randomly from $\mathbb{Z}^n_q$.
                
                \item $\mathcal{S}_{\textnormal{C}}$ uses $\mathcal{S}_{\textnormal{GS}}$ to generate a random circuit with a random output on $1^\lambda$ and $1^{|C|}$ as inputs. 
                
                \item During the PROXY-OT protocol, for each input bit $i$ of the client, the simulator $\mathcal{S}_{\textnormal{C}}$ sends two random samples $\mathbf{k}'_0$ and $\mathbf{k}'_1$ as the masked labels 
                %instead of $\mathbf{k}_0 \oplus \mathbf{u}_{i \oplus b}$ and $\mathbf{k}_1 \oplus \mathbf{u}_{\overline{i \oplus b}}$ 
                to the client.
            \end{enumerate}
            \item If $l$ is the last layer $L$, $\mathcal{S}_{\textnormal{C}}$ sends \texttt{HE.Enc(pk,$\mathbf{s}'_l$)} as the homomorphically evaluated ciphertext 
            %instead of $\mathbf{M}_l^{enc} \cdot \mathbf{c}_l^{enc} + \mathbf{s}_l^{enc}$ 
            to the client.
        \end{enumerate}

        \item In the online phase, $\mathcal{S}_{\textnormal{C}}$ sends $\mathbf{y} - \mathbf{c}_{L+1}$ to the client as the output of the layer evaluations.
            
    \end{enumerate}

\noindent \emph{Indistinguishability proof.}

    $\pmb{\mathcal{H}}_{0}$\textbf{:} This hybrid corresponds to the real-world distribution where the actual model parameters $\mathbf{M}_1 , \ldots , \mathbf{M}_L$ are used as the cloud server's input.
    
    $\pmb{\mathcal{H}}_{1}$\textbf{:} In the offline phase, we use the simulator for the function privacy of \texttt{HE}, $\mathcal{S}_{\texttt{FP}}$. So, for every homomorphic evaluation, $\mathcal{S}_{\texttt{FP}}$ uses $\mathbf{M}_l \cdot \mathbf{c}_l + \mathbf{s}_l$ as input to generate the evaluated ciphertext. From the function privacy of the \texttt{HE}, it can be concluded that $\pmb{\mathcal{H}}_{1}$ and $\pmb{\mathcal{H}}_{2}$ are computationally indistinguishable.
    
    $\pmb{\mathcal{H}}_{2}$\textbf{:} In this hybrid, we change the input of $\mathcal{S}_{\texttt{FP}}$ to be a random sample $\mathbf{s}'_l$ from $\mathbb{Z}^n_q$. This hybrid is identically distributed as the previous hybrid because of the semantic security of \texttt{HE}.

    $\pmb{\mathcal{H}}_{3}$\textbf{:} For the PROXY-OT protocol, the garbler server sends two random labels $\mathbf{k}'_0$ and $\mathbf{k}'_1$ to the client, instead of the actual masked labels. From the information-theoretic security of the PROXY-OT protocol, this hybrid is indistinguishable from the previous hybrid.
    
    $\pmb{\mathcal{H}}_{4}$\textbf{:} In the last step, the simulator $\mathcal{S}_C$ sends $\mathbf{y} - \mathbf{c}_{L+1}$ to the client, where $\mathbf{y} = \mathbf{M}(\mathbf{x})$ is the simulator's input. Since this is a syntactic change, $\pmb{\mathcal{H}}_{3}$ and $\pmb{\mathcal{H}}_{4}$ are computationally indistinguishable. Note that this hybrid is also identically distributed to the simulator's output, completing the proof. 

%---------------- computing edge server -------------------

\noindent \textbf{Compromised computing edge server.}

\noindent \emph{Simulator:} The simulator $\mathcal{S}_{\textnormal{ES},C}$ proceeds as follows.
    \begin{enumerate}
        \item $\mathcal{S}_{\textnormal{ES},C}$ chooses a uniform random tape for the compromised computing server.
        
        \item In the offline phase, for every layer $l$, $\mathcal{S}_{\textnormal{ES},C}$ sends random $[\mathbf{M}'_l]_v$ and $[\mathbf{s}'_l]_v$ chosen from $\mathbb{Z}_q^n$ as model parameter shares and cloud server's randomness to the compromised computing server.

        \item In the online phase, for every layer $l$, $\mathcal{S}_{\textnormal{ES},C}$ sends a uniformly random sample $\mathbf{c}'_l$ to the compromised computing server as the input to the layer that it should process.
    \end{enumerate} 

\noindent \emph{Indistinguishability proof:}

        $\pmb{\mathcal{H}}_{0}$\textbf{:} This hybrid corresponds to the real-world distribution where the actual client's input $\mathbf{x}$ and cloud server's model parameters $\mathbf{M}_1, \ldots , \mathbf{M}_L$ are used.

        $\pmb{\mathcal{H}}_{1}$\textbf{:} In the offline phase, the cloud server sends $[\mathbf{M}'_l]_v$ and $[\mathbf{s}'_l]_v$ instead of $[\mathbf{M}_l]_v$ and $[\mathbf{s}_l]_v$ to the computing edge server. Since $[\mathbf{M}'_l]_v$ and $[\mathbf{s}'_l]_v$ are uniformly random, this hybrid is indistinguishable from the previous one.
        
        $\pmb{\mathcal{H}}_{2}$\textbf{:} In the online phase, for every layer $l$, the garbler server sends a random sample $\mathbf{c}'_l$ from $\mathbb{Z}^n_q$ to the computing edge server instead of the actual layer input $\mathbf{x}_l - \mathbf{c}_l$. As $\mathbf{c}'_l$ is indistinguishable from the actual input, it can be concluded that $\pmb{\mathcal{H}}_{1}$ and $\pmb{\mathcal{H}}_{2}$ are indistinguishable. Also, $\pmb{\mathcal{H}}_{2}$ is identically distributed to the simulator's output, which completes the proof for this part.

%--------------------- garbler server -----------------

\noindent \textbf{Compromised garbler server.} 

\noindent \emph{Simulator:} The simulator $\mathcal{S}_{\textnormal{ES},G}$ proceeds as follows.
    \begin{enumerate}
        \item $\mathcal{S}_{\textnormal{ES},G}$ chooses a uniform random tape for the compromised garbler server.
        \item In the offline phase:
        \begin{enumerate}
            \item For every layer $\mathcal{S}_{\textnormal{ES},G}$ sends junk shares, $[\mathbf{M}'_l]_v$ as model parameter shares and $[\mathbf{s}'_l]_v$ as cloud server's randomness to the compromised garbler server.

            \item To run the PROXY-OT protocol, $\mathcal{S}_{\textnormal{ES},G}$ sends a random bit $b'$ as the masked input of the client to the garbler server.
        \end{enumerate}
        
        \item In the online phase:
        \begin{enumerate}
            \item $\mathcal{S}_{\textnormal{ES},G}$ sends a uniformly random sample $\mathbf{c}'_1$ as the first layer's input to the garbler server.
            
            \item For each layer $l$, $\mathcal{S}_{\textnormal{ES},G}$ sends $T$ random junk terms, corresponding to the outputs of the $T$ computing edge servers for that layer, to the garbler server.

            \item If layer $l$ has a non-linear component, $\mathcal{S}_{\textnormal{ES},G}$ sends a random junk share $\mathbf{c}'_{l+1}$ to the garbler server as the output of the evaluator server, which is the input of the next layer.
        \end{enumerate}
    \end{enumerate}

\noindent \emph{Indistinguishability proof:}

        $\pmb{\mathcal{H}}_{0}$\textbf{:} This hybrid corresponds to the real-world distribution where the actual client's input $\mathbf{x}$ and cloud server's model parameters $\mathbf{M}_1, \ldots , \mathbf{M}_L$ are used.

        $\pmb{\mathcal{H}}_{1}$\textbf{:} In the offline phase, the cloud server sends $[\mathbf{M}'_l]_v$ and $[\mathbf{s}'_l]_v$ instead of $[\mathbf{M}_l]_v$ and $[\mathbf{s}_l]_v$ to the garbler server. Since $[\mathbf{M}'_l]_v$ and $[\mathbf{s}'_l]_v$ are uniformly random, this hybrid is indistinguishable from the previous one.

        $\pmb{\mathcal{H}}_{2}$\textbf{:} In the offline phase, the client sends a random bit $b'$ instead of the actual input to the garbler server for every input in the PROXY-OT protocol. As $b'$ is indistinguishable from the actual input, we can conclude that $\pmb{\mathcal{H}}_{2}$ is indistinguishable from $\pmb{\mathcal{H}}_{1}$.

        $\pmb{\mathcal{H}}_{3}$\textbf{:} In the online phase, for the first layer, the client sends a random sample $\mathbf{c}'_1$ from $\mathbb{Z}^n_q$ to the garbler server instead of the actual input $\mathbf{x}_1 - \mathbf{c}_1$. As $\mathbf{c}'_1$ is indistinguishable from the actual input, it can be concluded that $\pmb{\mathcal{H}}_{2}$ and $\pmb{\mathcal{H}}_{3}$ are indistinguishable.

        $\pmb{\mathcal{H}}_{4}$\textbf{:} In the online phase, for every layer $l$, the $T$ computing edge servers send $[\mathbf{s}'_l]_v$ instead of $[\mathbf{M}_l]_v \cdot \mathbf{c}_l + [\mathbf{s}_l]_v$ to the garbler server. As $[\mathbf{s}'_l]_v$ is indistinguishable from the actual output, it can be concluded that $\pmb{\mathcal{H}}_{3}$ and $\pmb{\mathcal{H}}_{4}$ are indistinguishable.

        $\pmb{\mathcal{H}}_{5}$\textbf{:} In the online phase, for every layer $l$ with a non-linear component, the evaluator server sends a random sample $\mathbf{c}'_{l+1}$ from $\mathbb{Z}^n_q$ to the garbler server instead of the actual output $\mathbf{x}_{l+1} - \mathbf{c}_{l+1}$. As $\mathbf{c}'_{l+1}$ is indistinguishable from the actual output, it can be concluded that $\pmb{\mathcal{H}}_{4}$ and $\pmb{\mathcal{H}}_{5}$ are indistinguishable. We complete the proof by noting that this hybrid is identically distributed to the simulator's output.

%-------------------------- evaluator server --------------

\noindent \textbf{Compromised evaluator server.} 

\noindent \emph{Simulator:} The simulator $\mathcal{S}_{\textnormal{ES},E}$ proceeds as follows.
    \begin{enumerate}
        \item $\mathcal{S}_{\textnormal{ES},E}$ chooses a uniform random tape for the compromised evaluator server.
        
        \item In the offline phase:
            \begin{enumerate}
                \item For every layer $l$ with a non-linear component, $\mathcal{S}_{\textnormal{ES},E}$ uses $\mathcal{S}_{\textnormal{GS}}$ and runs it on $1^\lambda$, $1^{|C|}$ and sets the output of the circuit to be a random value. $\mathcal{S}_{\textnormal{GS}}$ sends $\Tilde{C}$ to the compromised evaluator server. 
                    
                \item To run the PROXY-OT protocol, $\mathcal{S}_{\textnormal{ES},E}$ randomly sends one sets of the output labels generated by $\mathcal{S}_{\textnormal{GS}}$ to the evaluator server. Noting that the exchanged labels correspond to the inputs of the GC generated in the offline phase.
            \end{enumerate}
            
            \item In the online phase, for every layer $l$ with a non-linear component, $\mathcal{S}_{\textnormal{ES},E}$ sends randomly one set of the labels generated by $\mathcal{S}_{\textnormal{GS}}$ to the evaluator server. Noting that the exchanged labels correspond to the input of the GC generated in the online phase.
                   
        \end{enumerate}

\noindent \emph{Indistinguishability proof:}

        $\pmb{\mathcal{H}}_{0}$\textbf{:} This hybrid corresponds to the real-world distribution where the actual client's input $\mathbf{x}$ and cloud server's model parameters $\mathbf{M}_1, \ldots , \mathbf{M}_L$ are used.

        $\pmb{\mathcal{H}}_{1}$\textbf{:} For the execution of PROXY-OT, the client sends a random label $\mathbf{k}_{b'}$ to the evaluator server where $b'$ is a random bit in executing PROXY-OT. As $b'$ is indistinguishable from the actual input $i \oplus b$, this hybrid is identically distributed to the previous one.

        $\pmb{\mathcal{H}}_{2}$\textbf{:} In the online phase, for every layer $l$ with a non-linear component, the garbler server sends $\mathbf{k}_{b'}$ to the evaluator server where $b'$ is a random bit. As all the inputs of the GC are secret shares, $b'$ is identically distributed to the actual input. Hence hybrid $\pmb{\mathcal{H}}_{2}$ is identically distributed to hybrid $\pmb{\mathcal{H}}_{1}$.

        $\pmb{\mathcal{H}}_{3}$\textbf{:} In the offline phase, for every layer $l$ with a non-linear component, we generate $\Tilde{C}$ using $\mathcal{S}_{\text{GS}}$ on input $1^\lambda$, $1^{|C|}$ and $C(\mathbf{w},\mathbf{z})$. As $C(\mathbf{w},\mathbf{z})$ is an additive secret share of $\mathbf{x}_{l+1}$, this hybrid is identically distributed to the previous one. Finally, we note that $\pmb{\mathcal{H}}_{3}$ is identically distributed to the simulator's output; hence, the proof is complete.

\section*{Appendix C: More on Communication Overhead}

As stated previously, Table \ref{tab:comm_complexity} provides a detailed comparison of the communication overhead between \pmdi and other frameworks such as Delphi, SecureNN, and Falcon. For the \emph{linear} rows, we assume that the layer in question is a convolutional layer with dimensions $(o, g, g)$ and an input dimension of $(i, h, h)$. Consequently, the communication overhead for a fully connected layer can be easily derived. For the \emph{non-linear} layer component, we assumed the activation function to be ReLU. We note that the communication overheads for Falcon and SecureNN are directly taken from their respective papers, while we independently calculated the overhead for Delphi. Table \ref{tab:notation} summarized the notation we use for the communication overhead analysis. 

\begin{table}[h]
    \caption{Notations used for the communication overhead analysis.}
    \label{tab:notation}
    \centering
    \begin{tabular}{>{\centering\arraybackslash}m{1.5cm} >{\arraybackslash}m{6cm}}
        \toprule
        Symbol & Description \\
        \midrule
        $N_{enc}$ & Number of bits of homomorphically encrypted data. \\
        $N$ & Number of bits of a single data $x \in \mathbb{Z}_q$. \\
        $|GC|$ & Number of bits to send the garbled circuit and corresponding inner labels. \\
        $\kappa$ & Length of the labels of GC. \\
        $h$ & Layer input dimension. \\
        $g$ & Dimension of the kernel of a layer. \\
        $i$ & Number of input channels. \\
        $o$ & Number of output channels. \\
        $p$ & Smaller field size in SecureNN \cite{securenn}. \\
        \bottomrule
    \end{tabular}
\end{table}

\pmdi emphasizes reducing online latency. By incorporating an offline phase, we halve the number of multiplications and minimize the communication complexity for convolutional layers by a factor proportional to the kernel size. These optimizations lead to lower latency compared to SecureNN. 
%\textcolor{blue}{
Furthermore, the computational complexity of \pmdi remains constant regardless of the number of computing nodes. In contrast, Falcon's computational complexity increases proportionally with the number of computing nodes on each server.
%} % I added this, but it's not related to communication complexity.
We note that existing work SECO \cite{chen2024seco} introduces additional homomorphic encryption computations for the second split (partition) of the ML model and increases the homomorphically encrypted communication overhead by a factor of three in the offline phase. Furthermore, during the online phase, the first part of the neural network requires an interactive protocol with the client, similar to Delphi, which introduces communication and computation cost. 
%Although SECO's code is not publicly available for direct comparison, our theoretical analysis indicates that Private MDI outperforms SECO in terms of offline phase latency and the removal of the client from the online phase.

%However, \pmdi emphasizes reducing online latency. By incorporating an offline phase, Private MDI halves the number of multiplications and minimizes the communication complexity for convolutional layers by a factor proportional to the kernel size. These optimizations lead to lower latency compared to Delphi and other approaches. In section \ref{sec:simulation results}, we validate our claims with simulation results.

%Compared to the parallel work SECO \cite{chen2024seco}, their approach for the offline phase introduces additional homomorphic encryption computations for the second split of the ML model and increases the homomorphically encrypted communication overhead by a factor of three. Furthermore, during the online phase, the first part of the NN requires an interactive protocol with the client, similar to Delphi. Although SECO's code is not publicly available for direct comparison, our theoretical analysis indicates that Private MDI outperforms SECO in terms of offline phase latency and the removal of the client from the online phase.

\begin{figure}[ht]
\centering
\subfigure[Offline. TI on VGG16]{ \scalebox{0.42}{\includegraphics[width=0.52\textwidth]{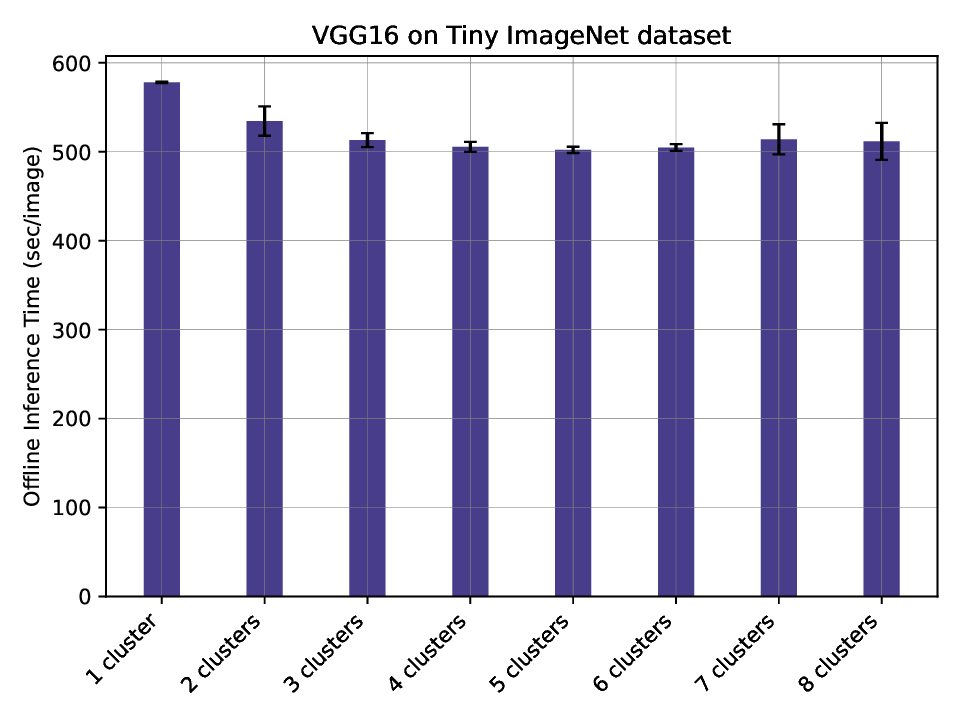}} }
\subfigure[Online. TI on VGG16]{ \scalebox{0.42}{\includegraphics[width=0.52\textwidth]{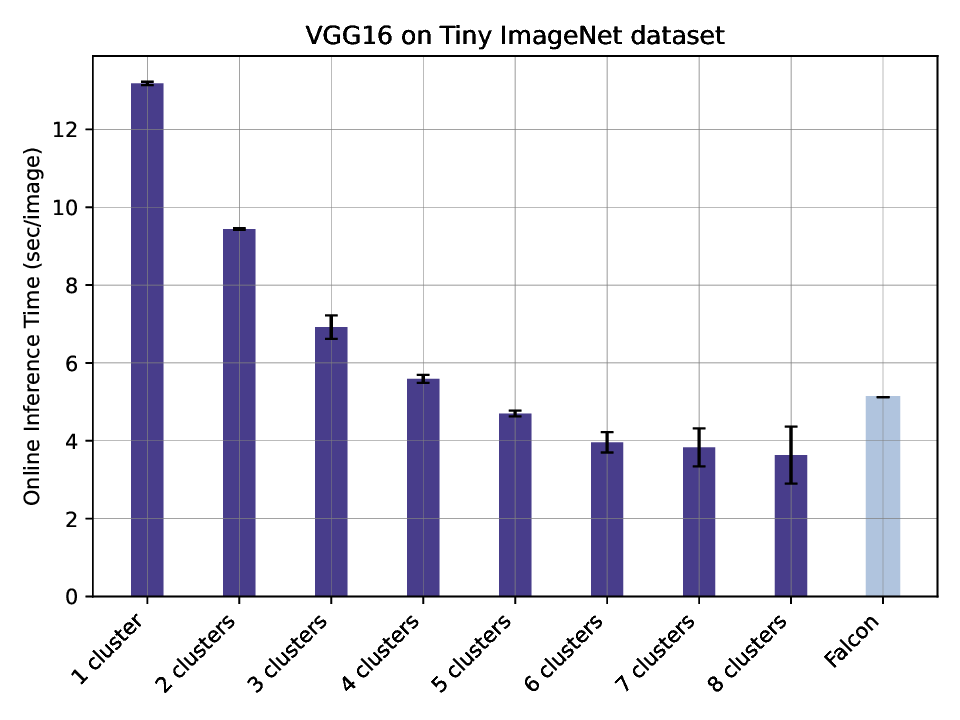}} }
%\vspace{-5pt}
\caption{ML inference time of \pmdi in online and offline phases with increasing number of clusters.}
%\vspace{-15pt}
\label{fig:pmdi-vs-falcon}
\end{figure}

\section*{Appendix D: Extended Evaluation}

Figure \ref{fig:pmdi-vs-falcon} presents a detailed evaluation of the VGG16 ML model's performance with varying cluster numbers in both offline and online phases. Fig. \ref{fig:pmdi-vs-falcon}(a) shows the delay of the offline protocol of \pmdi for the Tiny ImageNet (TI) dataset on VGG16 architecture, while Fig. \ref{fig:pmdi-vs-falcon}(b) shows the delay of the online protocol of \pmdi versus Falcon. As mentioned earlier, the main advantage of \pmdi over Falcon lies in the distribution of ML model layers. With an increasing number of clusters, \pmdi's performance becomes significantly better as compared to Falcon in the online phase.

\end{document}